\let\blx@rerun@biber\relax
\theoremstyle{definition}
\newtheorem{theorem}{Theorem}
\newtheorem{lemma}{Lemma}
\newtheorem{corollary}{Corollary}
\newcommand{\R}{\mathbb{R}}
\newcommand{\N}{\mathbb{N}}
\newcommand{\norm}{\mathcal{N}}
\newcommand{\prob}{\mathbb{P}}
\newcommand{\E}{\mathbb{E}}
\title{Rough-Fuzzy CPD:  A Gradual Change Point Detection Algorithm} 
\author{
    Ritwik Bhaduri\\
    Indian Statistical Institute\\
    Kolkata, India\\
    \texttt{ritwik.bhaduri@gmail.com}\\
    \And
    Subhrajyoty Roy\\
    Indian Statistical Institute\\
    Kolkata, India\\
    \texttt{roysubhra98@gmail.com}\\
    \And
    Sankar K. Pal\\
    Center for Soft Computing Research, \\
    Indian Statistical Institute\\
    Kolkata, India\\
    \texttt{sankar@isical.ac.in}\\
}
\begin{document}
\maketitle

\begin{abstract}
Changepoint detection is the problem of finding abrupt or gradual changes in time series data when the distribution of the time series changes significantly. There are many sophisticated statistical algorithms for solving changepoint detection problem, although there is not much work devoted towards gradual changepoints as compared to abrupt ones. Here we present a new approach to solve changepoint detection problem using fuzzy rough set theory which is able to detect such gradual changepoints. An expression for the rough-fuzzy estimate of changepoints is derived along with its mathematical properties concerning fast computation. In a statistical hypothesis testing framework, asymptotic distribution of the proposed statistic on both single and multiple changepoints is derived under null hypothesis enabling multiple changepoint detection. Extensive simulation studies have been performed to investigate how simple crude statistical measures of disparity can be subjected to improve their efficiency in estimation of gradual changepoints. Also, the said rough-fuzzy estimate is robust to signal-to-noise ratio, high degree of fuzziness in true changepoints and also to hyper parameter values. Simulation studies reveal that the proposed method beats other fuzzy methods and also popular crisp methods like WBS, PELT and BOCD in detecting gradual changepoints. The applicability of the estimate is demonstrated using multiple real-life datasets including Covid-19. We have developed the python package \hyperlink{https://pypi.org/project/roufcp/}{\texttt{roufcp}} for broader dissemination of the methods.
\end{abstract}

\keywords{Gradual changepoint detection \and Fuzzy set \and Rough set \and Rough-fuzzy Entropy \and Python package \texttt{roufcp}}


\section{Introduction}

\label{sec:intro}

Changepoint detection (CPD) is the process of detecting changes in the probability distribution of a time series data. There are mainly two different directions of the usabiliy of CPD. First direction concerns with the subject of statistical quality control, where any kind of abrupt changes in the data generating process of a time series observation can be detected through CPD methods, which serves the purpose of alarming the concerning authority ~\cite{lai1995sequential}. On the other hand, various time series modelling rely on CPD techniques to segment the study period, and then apply different modelling for each homogeneous segments~\cite{lemire2007better}.

First CPD method was possibly proposed by Hinkley~\cite{hinkley1970inference} based on the likelihood function and it was able to detect at most one changepoint present in the data. With a clustering approach, Scott and Knott~\cite{scott1974cluster} extended the idea to the detection of multiple changepoints based on Binary Segmentation, which was theoretically corroborated by a similar study of Sen and Srivastava~\cite{sen1975tests}. Picard~\cite{picard1985testing} extended the algorithm to a much general setup, using a generalized version of Kolmogorov-Smirnov statistic, as well as its simplification towards the mean and variance change in an autoregressive process. Some recent and popular algorithms for CPD include Pruned Exact Linear Time (PELT)~\cite{killick2012optimal}, Bayesian Change-point Detection analysis~\cite{turner2009adaptive,adams2007bayesian,knoblauch2018spatio}, Kernel Change-point Detection~\cite{harchaoui2009kernel}, Change-point detection using nonparametric statistical measures~\cite{haynes2017computationally}.

While there remains a vast amount of literature and existing methods on detection of abrupt changepoints~\cite{tartakovsky2014sequential,aminikhanghahi2017survey,van2020evaluation}, very few studies are concerned with gradual changepoints~\cite{chang2015fuzzy}. However, in different domains such as linguistics~\cite{sharma2011cognitive}, paleoclimatology~\cite{trauth2018abrupt}, paleontology~\cite{rose1986gradual}, and paleobiology~\cite{hunt2008gradual}, the changes in different phenomenons are often gradual in nature and not abrupt. In such cases, the changepoint detection algorithms developed to tackle the abrupt changes are found to be inefficient in estimation due to relatively larger variance. However, combining the theory of Fuzzy Rough sets allows an improvement over existing CPD algorithms in terms of increasing the efficiency in estimation, as will be shown in the paper.

The very few investigations focused on detection of fuzzy or gradual changepoints are based on fuzzy clustering techniques. Almost all of them~\cite{chang2015fuzzy,yu2001general} use fuzzy c-means clustering method or its variants to segment the period of study into homogeneous segments. Other approaches include fuzzy regression algorithms~\cite{LU201612,chang2015fuzzy}, and neural networks~\cite{d2011incipient}. No approach considers the discernibility issues that are prevalent in the examples with gradual changepoints, for instance, observations pertaining to two close time points would naturally come from a similar kind of distributions, converting each of the segments into a rough set instead of a crisp one. The present paper aims to provide a general methodology based on fuzzy rough sets to increase the efficiency of any statistical measure that can be used to detect changepoints.

The novelty of this paper is three fold. The first aspect is that we mathematically model uncertainty associated with gradual changepoints as a fuzzy set and the indiscernibility between very close time points using rough set theory. This enables us to devise a changepoint detection algorithm which is relatively robust to both low signal-to-noise ratio of the data and the high degree of fuzziness of the actual changepoint. Simulation studies show that our estimates remain fairly accurate over an extensive range of hyper parameter values furthering the robustness of the proposed method. The second aspect of this paper is that we connect the image segmentation problem from the field of Pattern Recognition and the changepoint detection problem from statistical domain. This connection was discussed by Chatterjee et al.~\cite{equivalence_of_image_segmentation_CP}, where they had formally expressed image segmentation as a special case of changepoint detection problem. However, our work can be interpreted as going to a reverse direction, by generalization of methods proposed for image segmentation by Sen and Pal~\cite{image-ambiguity-skpal} for the problem of changepoint detection. The third component of novelty is reducing the computational complexity of the aforementioned image segmentation algorithm by a series of results, and fitting the algorithm in the context of statistical changepoint detection problem. To formally present the proposed algorithm in conjunction with a statistical hypothesis testing framework, we derive the asymptotic distribution of the proposed rough-fuzzy estimator under some suitable regularity conditions. This enables us to test for false positives of the detected changepoints and also allows for multiple changepoint detection. The performance of the proposed method has been illustrated using extensive simulation studies as well as real data applications including COVID-19 dataset. We observe that our method outperforms other fuzzy CPD methods as well as state of the art crisp CPD methods like Wild Binary Segmentation, PELT and BOCD in detecting gradual changepoints. The implementation of our method is made available through a python package \hyperlink{https://pypi.org/project/roufcp/}{\texttt{roufcp}}, bearing the acronym for the name of the proposed algorithm Rough-Fuzzy CPD.

The rest of the paper is organized as follows: In section \ref{sec:math-background}, we formulate the problem of CPD and provide a brief introduction to fuzzy set, rough set, and the entropy used to define our statistic. Section \ref{sec:proposed-method} provides a general development of the proposed method and defines a rough-fuzzy estimate of change point. Section \ref{sec:math-property} contains the mathematical results of the paper, divided into two subsections. The first subsection deals with the reduction of computational complexity of the proposed method, while the second subsection provides an asymptotic result which can be put into hypothesis testing framework to detect false positives and multiple changepoint detection. In section \ref{sec:simulation}, we perform extensive simulation studies of the performance of the proposed model under different signal-to-noise ratios, levels of fuzziness of actual changepoint and hyper parameter values. We also compare the performance of our algorithm Rough-Fuzzy CPD with other fuzzy as well as crisp methods for different simulation setups. In section \ref{sec:real-examples}, we apply the proposed method on real-life datasets. Section \ref{sec:conclusion} concludes the paper with adequate discussion on strengths and a few limitations of the proposed method, as well as the possibility of future extensions of our work.

\section{Mathematical Backgrounds}\label{sec:math-background}

\subsection{Problem of changepoint Detection}
The mathematical setup of the problem of changepoint detection starts with a multivariate time series signal $\left\{y_t \right\}_{t=1}^{T}$, where each $y_t \in \R^p$. There also exist some fixed points $1 < t_1^\ast < t_2^\ast < \dots t_k^\ast < T$, such that within the interval $\left[ t_{(i-1)}, t_i \right]$, the time series observations follow same probability distributional model, but these models differ from one interval to another. In particular,

\begin{equation}
    y_t \sim \begin{cases}
    F_0(t) & \text{ if } 1 \leq t < t_1\\
    F_1(t) & \text{ if } t_1 \leq t < t_2\\
    \dots & \dots \\
    F_{(k-1)}(t) & \text{ if } t_{(k-1)} \leq t < t_k\\
    F_k(t) & \text{ if } t_k \leq t \leq T\\
    \end{cases}
    \label{eqn:cp-description}
\end{equation}

and $F_i \neq F_{(i+1)}$ for $i = 0, 1, \dots (k-1)$. A special case follows when the signal is assumed to be piece-wise stationary. The problem of CPD deals with identifying the set of points $\left\{ t_1, t_2, \dots t_k\right\}$ where the distribution or behavior of the signal changes, from the knowledge of the available time series observations $y_t$'s. The knowledge of the number of changepoints $k$ may or may not be known a-priori, in which case it also has to be estimated from the data.

A general framework for existing CPD algorithms is provided in a comprehensive review by Truong et al.~\cite{truong2020selective}. Most of the existing algorithms involve minimization of a criterion function $V(\mathcal{T}, y)$ to obtain the best set of changepoints $\mathcal{T}$, 

\begin{equation}
V(\mathcal{T}, y) = \sum_{i = 1}^{k} c\left(\left\{ y_t \right\}_{t_{(i-1)}}^{t_i}\right) + \text{Penalty}(\mathcal{T})
\label{eqn:v-cost}    
\end{equation}

\subsection{Fuzzy Set}
Fuzzy sets, proposed by Lotfi A. Zadeh~\cite{zadeh1965fuzzy}, provide a mathematical means of modelling vagueness and imprecise information. They are a generalization of crisp sets or normal sets. A fuzzy set is represented by an ordered pair $\langle X , \mu \rangle$. Denoting $\mathbb{U}$ as the universe of objects, and $X \subseteq \mathbb{U}$, the membership function for the set $X$ can be described as $\mu : \mathbb{U} \rightarrow [0, 1]$, indicating the degree on inclusion of an element $x \in \mathbb{U}$ into the set $X$. Consequently, 3 cases are possible based on the value of $\mu(x)$:

\begin{enumerate}
\item Not included in $\langle \mathbf{X}, \mu\rangle$ if  $\mu(x) = 0$.
\item Partially included in $\langle \mathbf{X}, \mu\rangle$ if  $0 < \mu(x) < 1$.
\item Fully included in $\langle \mathbf{X}, \mu\rangle$ if  $\mu(x) < 1$.
\end{enumerate}

While fuzzy sets have been extensively used in fields like control systems for electronic devices~\cite{chen2001introduction}, decision-making support systems in businesses~\cite{zimmermann1998fuzzy}, prediction systems in finance~\cite{bojadziev2007fuzzy}, quantitative pattern analysis for industrial quality assurance~\cite{kabir2018review}, and uncertainty modelling in pattern recognition, image processing and vision~\cite{mitra2005fuzzy}, its use in changepoint analysis and statistics, in general, has been very limited. In this paper, we shall model the membership of a data point to a particular distribution as a fuzzy set to characterize the ambiguity in the data generation process.

\subsection{Rough set}
The notion of rough sets, as introduced by Pawlak~\cite{pawlak1982rough}, is defined as follows: Let $\mathcal{A}$ = $\langle \mathbb{U}, A \rangle$ be an information system, where $\mathbb{U}$ is the universe of all objects as mentioned before, and $A$ be the set of attributes which are used to identify two elements of $\mathbb{U}$. Let $B\subseteq A$ be a subset of attributes and $X \subseteq \mathbb{U}$. We can approximate the set $X$ using only the information contained in $B$ by constructing lower and upper approximations of $X$. Let $[x]_B$ denote the equivalence class of object $x$ relative to $I_B$ (equivalence relation induced by the variables in $B$). In rough set theory, the upper and lower approximations of the set $X$ under $I_B$ are defined as follows;

\begin{itemize}
    \item ${\underline{B}X} = $ B-Lower approximation of $X$ in $\mathbb{U}$ =$ \{x \in \mathbb{U}: [x]_B \subseteq X\}$
    \item ${\overline{B}X}$ = B-upper approximation of $X$ in $\mathbb{U}$ = $ \{x \in \mathbb{U}: [x]_B \cap X \neq \phi\}$
\end{itemize}

So, the lower approximation of a set $X$ relative to $B$ in $\mathbb{U}$ are the elements in $\mathbb{U}$ which can be certainly classified as elements of $X$ based on B. Intuitively, it is the set of all elements $x$ in $\mathbb{U}$ such that the equivalence class containing $x$ is a subset of target set $X$. In other words, the lower approximation is the set of objects that are certainly members of the target set $X$.

On the other hand, the upper approximation is the complete set of objects $x$ such that the equivalence class containing $x$ has non empty intersection with $X$. Intuitively, it is the set of elements in $\mathbb{U}/B$ that cannot be positively (i.e., unambiguously) classified as belonging to the complement of $X$ of the target set $X$. In contrast to lower approximation, the upper approximation is the complete set of objects that are certainly as well as possibly members of the target set $X$.

The pair $\langle \underline{B}X, \overline{B}X\rangle$ denotes the rough representation of the crisp set $X$ with respect to $B$. This rough representation actually captures the uncertainty in defining $X$ in view of the incomplete information provided by the subset of attributes $B$. Pawlak~\cite{pawlak1991rough} provided a numerical characterization of roughness of $X$ as 
\begin{equation}\label{eq:rho}
    \rho_B(X) = 1 - \dfrac{|\underline{B}X|}{|\overline{B}X|}
\end{equation}

So $\rho_B = 0$ means, the set $X$ is crisp or exact  (with respect to $B$) and conversely, $\rho_B > 0$ means, $X$ is rough, i.e. ambiguous, with respect to $B$. This vague definition of $X$ in $\mathbb{U}$ (in terms of lower and upper approximations) signifies incompleteness of knowledge about $\mathbb{U}$.

\subsection{Lower and upper approximations of a fuzzy set}
Similar to the lower and upper approximation of a crisp set, definition of such approximations for a fuzzy set can also be established. While the definition of lower and upper approximations of a set may vary based on the nature of set $X$ (crisp or fuzzy) as well as the nature of relation between elements of the universe $\mathbb{U}$, here, we are only concerned with approximations of a fuzzy set $X$ with respect to an indiscernibility relation $R$ which is either an equivalence relation (crisp or fuzzy) or a tolerance relation (crisp or fuzzy). The expressions of upper and lower approximations for both these cases have been derived by Sen and Pal~\cite{image-ambiguity-skpal}. Since the set approximations obtained using equivalence relations are not always smooth, we will restrict our consideration only on the tolerance relations. The general implicit expressions of these approximations are given below, while a more explicit form of such approximations will be derived later in eqn.~\ref{sec:math-property}.

A tolerance relation (crisp or fuzzy) is a (crisp or fuzzy) relation which satisfies (crisp or fuzzy) reflexivity and symmetry. Unlike equivalence relations, tolerance relations are not necessarily transitive. When $R$ is a tolerance relation, the space $\langle \mathbb{U}, R \rangle$ is called tolerance approximation space. Associated with a tolerance relation $R$, there is a membership function $S_R$ of the relation itself, where $S_R(u, v)$ denotes the membership value of the pair $(u, v)$ belonging to the relation $R$. With the help of these, the lower and upper approximations of any crisp or fuzzy set $X$ are obtained as follows;
\begin{equation}
    \begin{split}
        \underline{R}X = \{ (u, \underline{M}(u)) | u \in \mathbb{U}\}\\
    \overline{R}X = \{ (u, \overline{M}(u)) | u \in \mathbb{U}\}
    \end{split}
    \label{eqn:set-approx}
\end{equation}
where the approximations of the membership functions are based on the tolerance function $S_R$ of the relation $R$;
\begin{equation}
    \begin{split}
        \underline{M}(u) &= \inf_{\psi \in \mathbb{U}} \max (1-S_R(u, \psi), \mu_X(\psi))\\
        \overline{M}(u) &= \sup_{\psi \in \mathbb{U}} \min (S_R(u, \psi), \mu_X(\psi))        
    \end{split}
    \label{eqn:set-approx-membership}
\end{equation}
where $\mu_X$, which takes values in the interval $[0, 1]$, is the membership function associated with set $X$. When $X$ is a crisp set, $\mu_X$ would take values only from the set $\{0, 1\}$. Similarly, when $R$ is a crisp tolerance function, $S_R(u, \psi)$ would take values only from the set $\{0, 1\}$.

\subsection{Entropy measures} \label{sec:entropy}
Now, using the definitions of upper and lower approximations of a set, an entropy can be defined to quantify the ambiguity in description of $X$. While the roughness measure $\rho_B(X)$ as in eqn.~\ref{eq:rho} gives a measure of ambiguity in the description of $X$, they are needed to be transformed to properly describe information gain. In this regard, two types of gain functions as defined by Pal and Pal~\cite{entropy-skpal} are considered.

\subsubsection{Logarithmic entropy} 

This is derived by using logarithmic function to measure the gain in incompleteness similar to ``gain in information" in Shanon's entropy. The logarithmic entropy measure for quantifying the incompleteness of knowledge about $\mathbb{U}$ with respect to the definability of a set $X\in \mathbb{U}$ is given as 
\begin{equation}\label{eq:log_entropy}
    H_R^L(X) = -\frac{1}{2}\left(\chi(X) + \chi(X^C) \right)
\end{equation}

where for any set $D \in \mathbb{U}$, $\chi(D) = \rho_R(D)\log_\beta\left(\frac{\rho_R(D)}{\beta}\right)$. Note that the ``gain in incompleteness" term is taken as $\log_\beta\left(\frac{\rho_R(D)}{\beta}\right)$ and for $\beta > 1$ it takes values in $[1,\infty]$.

\subsubsection{Exponential entropy} 

The other kind of entropy measure is defined using exponential function for the ``gain in incompleteness". This class of entropy is derived using $\chi(D) = \rho_R(D)\beta^{\overline{\rho}_R(D)}$ where $\overline{\rho}_R(D) = 1-\rho_R(D)$. So the expression of entropy becomes;
\begin{equation}\label{eq:exp_entropy}
    H_R^E(X) = -\frac{1}{2}\left(\rho_R(X)\beta^{\overline{\rho}_R(X)} + \rho_R(X^C)\beta^{\overline{\rho}_R(X^C)} \right)
\end{equation}

Here the gain in incompleteness term is taken as $\beta^{(1-\rho_R)}$ which takes values in $[1, \beta]$ when $\beta>1$. This class of exponential entropy functions possesses various desirable properties which are not present in the usual Shanon's entropy as illustrated by Pal and Pal~\cite{entropy-skpal}. Some of them are as follows:

\begin{enumerate}
    \item In Shanon's entropy the gain in information $\log(1/p) \rightarrow  
    \infty$ as $p \rightarrow 0$ and is undefined for $p = 0$. However, in real life, gain in information from an event, whether highly unlikely or highly probable is expected to be finite. However, exponential gain function ensures that such gain in information is bounded between $1$ and $\beta$.
    \item Logarithmic entropy is very sensitive to outliers due to the nature of $\log$ function. In contrast, exponential entropy is more robust to outliers. Particularly, a small probability relative to the other probabilities will bias the entropy towards the small probability event since the weight function $\log(1/p)$ in logarithmic entropy has an unbounded derivative. In comparison, the weight function $\beta^{(1-p)}$ used in exponential entropy has bounded derivative.
\end{enumerate}

We have considered $\beta = e$ in the subsequent sections, but any value greater than $1$ is suitable.

\section{Proposed Method: Detecting Rough-fuzzy changepoint}\label{sec:proposed-method}

In terms of mathematical setup, we consider $\left\{ y_1, y_2, \dots y_T  \right\}$ as the time series data with $y_t \in \R^p$ for any $t = 1, 2, \dots T$. With the number of changepoints equal to $1$, we assume that the data comes from a distribution $\mathcal{F}$ at first and then gradually comes from a different distribution $\mathcal{G}$ after some time. Since the location of the changepoint is ambiguous in nature, this creates the possibility of splitting the set of time points $\mathbb{U} = \{1,2, ... ,T\}$ into two fuzzy partitions, $\gamma_\mathcal{F} = \langle\mathbb{U}, \mu_{\mathcal{F}}\rangle$ and $\gamma_\mathcal{G} = \langle\mathbb{U}, \mu_{\mathcal{G}}\rangle$. Here, $\mu_{\mathcal{F}}$ and $\mu_{\mathcal{G}}$ denote the membership function of the respective partitions with observations coming from $\mathcal{F}$ and $\mathcal{G}$ respectively. Clearly, an obvious restriction is that $\mu_{\mathcal{F}}(t) + \mu_{\mathcal{G}}(t) = 1$. This is related to the incompleteness in knowledge about $\mathbb{U}$ and can be quantified by the entropy measures described by Sen and Pal~\cite{image-ambiguity-skpal}.

To formalize this notion of fuzziness, for $t \in \mathbb{U}$, for an estimated changepoint $s$ and bandwidth $\Delta$, we define a fuzzy measure similar to that in~\cite{image-ambiguity-skpal}.
\begin{equation}
    \mu_{s,\Delta}(t) = 
        \begin{cases}
          1 &  t \leq s -\Delta\\
          1-2\left[\dfrac{t-(s -\Delta)}{\Delta}\right]^2 & s -\Delta < t \leq s \\
          2\left[\dfrac{(s+\Delta) - t}{\Delta}\right]^2 & s < t \leq s+\Delta \\
          0 &  t > s+\Delta
        \end{cases}
    \label{eqn:fuzzy-membership}
\end{equation}

On the basis of this estimated changepoint $s$, the partitions $\gamma_{\mathcal{F}}$ and $\gamma_{\mathcal{G}}$ can be reformalized as $\gamma_s = \gamma_{\mathcal{F}} = \{(t, \mu_{s, \Delta}(t)) | t\in \mathbb{U}\}$ and $\gamma_s^C = \gamma_{\mathcal{G}} = \{(t, 1 - \mu_{s, \Delta}(t)) | t\in \mathbb{U}\}$, which captures the fuzzy nature of the two partitions due to the estimation of changepoint by a fixed quantity $s$.

Following the footsteps of grayness ambiguity, as formulated in~\cite{image-ambiguity-skpal}, we consider a tolerance function $S_w(u, v)$ such that, 

\begin{enumerate}
    \item $S_w(u, u) = 1$ for any $u \in \mathbb{U} = \{ 1, 2, \dots T\}$.
    \item $S_w(u, v)$ is a decreasing function in $\vert u - v\vert$.
    \item $S_w(u, v) = 0$ if $\vert u - v \vert \geq 2w$, where $w$ is a chosen window length. This means, sufficiently spaced timepoints can be distinguished quite nicely.
\end{enumerate}

Then, the lower and upper approximations of the fuzzy set $\gamma_s$ can be constructed as $\underline{\gamma_s} = \left\{ (u, {M}_{\underline{\gamma_s}}) : u \in \mathbb{U} \right\}$ and $\overline{\gamma_s} = \left\{ (u, {M}_{\overline{\gamma_s}}) : u \in \mathbb{U} \right\}$, where ${M}_{\underline{\gamma_s}}$ and ${M}_{\overline{\gamma_s}}$ are obtained using eqn.~\ref{eqn:set-approx-membership} with the tolerance function $S_w$ and membership function $\mu_{s, \Delta}(t)$. In a similar way, lower and upper approximations of the fuzzy complement set $\gamma_s^C$ can also be obtained. While these approximations can be pre-computed without the knowledge of time series data, we incorporate the knowledge of the available data by means of a regularity measure and combine it with the set approximations to obtain an ambiguity measure.

In order to create a regularity measure, for each timepoint $t$, a two sample test statistic is computed to detect the changes in the samples $\{ y_{(t-\delta+1)}, y_{(t-\delta+2)}, \dots y_{t} \}$ and $\{ y_{(t+1)}, y_{(t+2)}, \dots y_{(t+\delta)} \}$. As the regularity measure $R(t)$ is expected to take higher value when there is no change in distribution and take lower value when there is a detected change in distribution, some transformation of test statistic could be used. One such regularity measure to detect the changes in mean could be based on Hotelling's T$^2$ test statistic.

\begin{equation*}
    R(t) = \dfrac{1}{1 + (\bar{y}_{1} - \bar{y}_2)^{\intercal} \Sigma^{-1} (\bar{y}_{1} - \bar{y}_2) }
\end{equation*}
where,
\begin{align*}
    \bar{y}_1 & = \delta^{-1} \hspace{-0.5cm}\sum_{t' = (t-\delta+1)}^{t} \hspace{-0.4cm} y_{t'} \hspace{2cm}
    \bar{y}_2 = \delta^{-1} \hspace{-0.35cm}\sum_{t' = (t+1)}^{(t+\delta)} \hspace{-0.25cm} y_{t'}\\
    \Sigma & = \delta^{-1} \hspace{-0.5cm}\sum_{t' = (t - \delta+1)}^{(t + \delta)} \hspace{-0.2cm} \left( y_{t'} - \dfrac{1}{2}(\bar{y_1} + \bar{y}_2) \right) \left( y_{t'} - \dfrac{1}{2}(\bar{y_1} + \bar{y}_2) \right)^\intercal
\end{align*}

A roughness measure is created for the two fuzzy partitions $\gamma_s$ and $\gamma_s^C$, incorporating the information of changes in the data as provided by suitably transformed test statistic $R(t)$, and combining them with the respective upper and lower approximations of the fuzzy membership functions.
\begin{equation}
    \begin{split}
        \rho_{\Delta, \delta, w}(\gamma_s) & = 
        1 - \dfrac{\sum_{t = 1}^{T} {M}_{\underline{\gamma_s}}(t) R(t) }{\sum_{t = 1}^{T} {M}_{\overline{\gamma_s}}(t) R(t)}\\
        \rho_{\Delta, \delta, w}(\gamma_s^c) & = 
        1 - \dfrac{\sum_{t = 1}^{T} {M}_{\underline{\gamma_s^c}}(t) R(t) }{\sum_{t = 1}^{T} {M}_{\overline{\gamma_s^c}}(t) R(t)}\\        
    \end{split}
    \label{eqn:new-roughness}
\end{equation}

Based on these roughness measures, entropy quantifying the ambiguity for the fuzzy partitions of the time span for the specifically chosen changepoint $s$ can be expressed using eqn.~\ref{eq:log_entropy} or eqn.~\ref{eq:exp_entropy}, with $\rho_R(X)$ replaced by the roughness measure given in eqn.~\ref{eqn:new-roughness}. Thus, we obtain
\begin{equation}
    H^E_{\Delta, \delta, w}(s) = \rho_{\Delta, \delta, w}(\gamma_s) e^{\left(1 - \rho_{\Delta, \delta, w}(\gamma_s)\right)} + \rho_{\Delta, \delta, w}(\gamma_s^C)e^{\left( 1 - \rho_{\Delta, \delta, w}(\gamma_s^C) \right)} 
    \label{eqn:new-entropy}
\end{equation}

Proceeding in the direction of any general changepoint detection method~\cite{truong2020selective} for detecting a single changepoint present in the data, any CPD algorithm can be broadly expressed as
\begin{equation}
    t^\ast = \min_{t \in \{ 1, 2, \dots T\} } R(t) = \min_{t \in \{ 1, 2, \dots T\} } V\left(\left( \{ 1, 2, \dots t \}, \{ (t+1), (t+2), \dots T \} \right), y\right).
    \label{eqn:old-cp}
\end{equation}

where $t^\ast$ is the estimated changepoint and $V(\cdot, \cdot)$ is a cost function as shown in eqn.~\ref{eqn:v-cost}, which can be interpreted as a regularity measure. While in this way, the regularity measure $R(t)$ itself becomes an indicator of the changepoint, it can be greatly enhanced in combination of fuzzy and rough set theory, by constructing the entropy as given in eqn.~\ref{eqn:new-entropy}. Thus, according to the proposed algorithm, the estimated changepoint is given as
\begin{equation}
    t^\ast = \min_{t \in \{ 1, 2, \dots T\} }  H^E_{\Delta, \delta, w}(t).
    \label{eqn:new-cp}
\end{equation}
This estimated changepoint $t^\ast$ shown in eqn.~\ref{eqn:new-cp} may be denoted as the rough-fuzzy CP, and the method may be called rough-fuzzy CPD. Note that while the global minima serves as the estimate of single changepoint, the local minima of the entropy function, after suitable testing, are used for multiple changepoint detection. 

The reason for including roughness in the proposed changepoint detection algorithm is subtle. Since the underlying distributions of the time series data is unknown, all the information about the changepoint must be gathered in terms of the available observations $y_t$, which is now summarized only through a single attribute $R(t)$, the regularity measure. Along with this loss of information, since the regularity measure $R(t)$ is computed based on overlapping windows, any information about the locality of a timepoint will permeate to its neighbouring timepoints as well, resulting in incomplete information about the timepoints itself. This rough resemblance between different time points is modelled by the tolerance relation which eventually leads to the rough set formulation.

Further, the entropy function depends on the hyper parameters $w$, $\delta$ and $\Delta$. In the proposed rough-fuzzy CPD, $w$ denotes the degree of roughness of the tolerance function with higher values indicating greater roughness and $\Delta$ determines the fuzziness of the membership function with higher values corresponding to greater fuzziness. It is important to choose the values of $w$ and $\Delta$ correctly as wrongly chosen values might result in relatively higher error in estimation, though a wide range of such optimal values are available as shown in section~\ref{sec:sensitivity analysis}. 

\section{Mathematical Properties}\label{sec:math-property}
Here we provide two mathematical properties of the aforesaid estimate $t^\ast$, involving rough-fuzzy entropy, towards the detection of gradual changepoints. First one deals with the issue of its speedy computation, while the other theoretically establishes its ability in detecting multiple changepoints. In the section~\ref{sec:fast comuputation}, we obtain exact solutions of  upper and lower approximations and also derive relations between them to reduce computation. In section~\ref{sec:asymptotics}, we prove a theorem which gives the asymptotic distribution of rough-fuzzy entropy under the null hypothesis that there are no changepoints. An immediate corollary of this theorem gives us the joint asymptotic distribution of rough-fuzzy entropy evaluated at multiple proposed changepoints under the null hypothesis. This helps to present our algorithm in a hypothesis testing framework and also test for false positives.

\subsection{Fast Computation of Upper and Lower Approximations}\label{sec:fast comuputation}

We start by introducing a result which relates the upper and lower approximations of the set $\gamma_s$ with the approximations for the complement set $\gamma_s^C$. The significance of the result is that in order to obtain the roughness measure $\rho_{\Delta, \delta, w}(\gamma_s)$ and $\rho_{\Delta, \delta, w}(\gamma_s^C)$ as given in eqn.~\ref{eqn:new-roughness}, it is enough to focus the computation on only one of these terms, and the other can be obtained as a byproduct of the result.

\begin{lemma}
    \label{thm:approx-complement}
    \begin{align*}
        M_{\underline{\gamma}_s}(t) & = 1 - M_{\overline{\gamma}_s^C}(t) \\
        M_{\overline{\gamma}_s}(t) & = 1 - M_{\underline{\gamma}_s^C}(t)\\
    \end{align*}
\end{lemma}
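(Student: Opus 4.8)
The plan is to unfold both sides of each identity using the implicit expressions for the approximated membership functions in eqn.~\ref{eqn:set-approx-membership}, and then to reduce the claim to the elementary De Morgan dualities between $\inf/\sup$ and between $\min/\max$ over numbers in $[0,1]$. Concretely, for the fuzzy set $\gamma_s$ with membership function $\mu_{s,\Delta}$ and tolerance function $S_w$, eqn.~\ref{eqn:set-approx-membership} gives $M_{\underline{\gamma}_s}(t) = \inf_{\psi \in \mathbb{U}} \max\bigl(1 - S_w(t,\psi),\, \mu_{s,\Delta}(\psi)\bigr)$, while the complement $\gamma_s^C$ has membership function $1 - \mu_{s,\Delta}$ (as defined in Section~\ref{sec:proposed-method}), so that $M_{\overline{\gamma}_s^C}(t) = \sup_{\psi \in \mathbb{U}} \min\bigl(S_w(t,\psi),\, 1 - \mu_{s,\Delta}(\psi)\bigr)$.

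The key step is then the short computation
\begin{align*}
1 - M_{\overline{\gamma}_s^C}(t) &= 1 - \sup_{\psi \in \mathbb{U}} \min\bigl(S_w(t,\psi),\, 1 - \mu_{s,\Delta}(\psi)\bigr)\\
&= \inf_{\psi \in \mathbb{U}} \Bigl[ 1 - \min\bigl(S_w(t,\psi),\, 1 - \mu_{s,\Delta}(\psi)\bigr) \Bigr]\\
&= \inf_{\psi \in \mathbb{U}} \max\bigl(1 - S_w(t,\psi),\, \mu_{s,\Delta}(\psi)\bigr) = M_{\underline{\gamma}_s}(t),
\end{align*}
where the second equality uses $1 - \sup_\psi g(\psi) = \inf_\psi\bigl(1 - g(\psi)\bigr)$ and the third uses $1 - \min(a,b) = \max(1-a,1-b)$. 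The second identity $M_{\overline{\gamma}_s}(t) = 1 - M_{\underline{\gamma}_s^C}(t)$ is obtained the same way, now invoking $1 - \inf_\psi g(\psi) = \sup_\psi\bigl(1 - g(\psi)\bigr)$ and $1 - \max(a,b) = \min(1-a,1-b)$.

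There is essentially no serious obstacle here; the only things to keep track of are that $\mu_{s,\Delta}$ and $S_w$ take values in $[0,1]$, so every quantity that appears lies in $[0,1]$ and the $\min/\max$ De Morgan laws apply verbatim, and that, since $\mathbb{U} = \{1,2,\dots,T\}$ is finite, the $\inf$ and $\sup$ are attained and may be read as $\min$ and $\max$ — although the identity $1 - \sup = \inf(1 - \cdot)$ needs no finiteness assumption. I would close by remarking that Lemma~\ref{thm:approx-complement} is precisely the fuzzy-tolerance analogue of the classical rough-set complementation duality $\overline{R}(X^C) = \bigl(\underline{R}X\bigr)^C$, and that its payoff is the one advertised before the statement: in forming the roughness measures of eqn.~\ref{eqn:new-roughness} one needs to evaluate the membership approximations for only one of $\gamma_s$ and $\gamma_s^C$, the other pair being obtained for free.
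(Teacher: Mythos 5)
Your proposal is correct and follows essentially the same route as the paper's own proof: both unfold $M_{\underline{\gamma}_s}(t)$ and $M_{\overline{\gamma}_s^C}(t)$ via eqn.~\ref{eqn:set-approx-membership}, apply the dualities $1-\min(a,b)=\max(1-a,1-b)$ and $1-\sup = \inf(1-\cdot)$, and note the complementary membership $1-\mu_{s,\Delta}$, proving one identity and treating the other symmetrically. The only difference is cosmetic — you start from $1 - M_{\overline{\gamma}_s^C}(t)$ while the paper starts from $M_{\underline{\gamma}_s}(t)$ — so no further comment is needed.
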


\begin{proof}
    Starting with the lower approximation;
    \begin{align*}
        M_{\underline{\gamma}_s}(t)
        & = \inf_{\psi \in \mathbb{U}} \max\left( \overline{S}_w(t, \psi), \mu_{s, \Delta}(\psi) \right)\\
        & = \inf_{\psi \in \mathbb{U}} \left[ 1 - \min\left( 1 - \overline{S}_w(t, \psi), 1 - \mu_{s, \Delta}(\psi) \right)\right] \\
        & = \inf_{\psi \in \mathbb{U}} \left[ 1 - \min\left( S_w(t, \psi), \overline{\mu}_{s, \Delta}(\psi) \right)\right] \\
        & = 1 - \sup_{\psi \in \mathbb{U}} \min\left( S_w(t, \psi), \overline{\mu}_{s, \Delta}(\psi) \right) \\
        & = 1 - M_{\overline{\gamma}_s^C}(t)
    \end{align*}

    The other equality proceeds in an exactly similar way.
\end{proof}

Also, since the computation of the lower and upper approximations ${M}_{\underline{\gamma_s}}(t)$ and ${M}_{\overline{\gamma_s}}(t)$ is independent of the data, it can be pre-computed for the changepoint analysis problems, given the knowledge of the number of timepoints $T$. However, if $T$ is large, computation of eqn.~\ref{eqn:set-approx-membership} poses a high memory and computational complexity. However, it is possible to obtain exact expressions of these lower and upper approximations under a very general setup, which greatly reduces both the computational and storage cost complexities of the whole process.

To establish an explicit formula for lower and upper approximations, we first consider the situation when $M_{\underline{\gamma}_s}(t) = 0$, which happens if and only if there exists a $\psi \in \mathbb{U}$ such that, both $\overline{S}_w(t, \psi)$ and $\mu_{s, \Delta}(\psi)$ are equal to $0$. While the complementary tolerance function $\overline{S}_w$ is $0$, if and only if $S_w$ is $1$, i.e. the two arguments satisfy $t = \psi$. On the other hand, $\mu_{s, \Delta}(\psi) = 0$ if and only if $\psi \geq (s+\Delta)$, combining this with $t = \psi$ yields, $t \leq (s + \Delta)$.

On the other extreme, $M_{\underline{\gamma}_s}(t) = 1$, if and only if, either $S_w(t, \psi) = 0$ or $\mu_{\Delta, s}(\psi) = 1$. The former happens when $\vert t - \psi \vert \geq 2w$, and the latter happens if $\psi \leq (s - \Delta)$. Thus, for any $t \leq (s - 2w - \Delta)$, $M_{\underline{\gamma}_s}(t) = 1$.

By symmetry, $M_{\overline{\gamma}_s}(t) = 0$ if $t \geq (s + 2w + \Delta)$ and $M_{\overline{\gamma}_s}(t) = 1$ if $t \leq (s - \Delta)$.

\begin{figure}[ht]
    \centering
    \includegraphics[width = \linewidth, trim = {2cm 5cm 4cm 5cm}, clip]{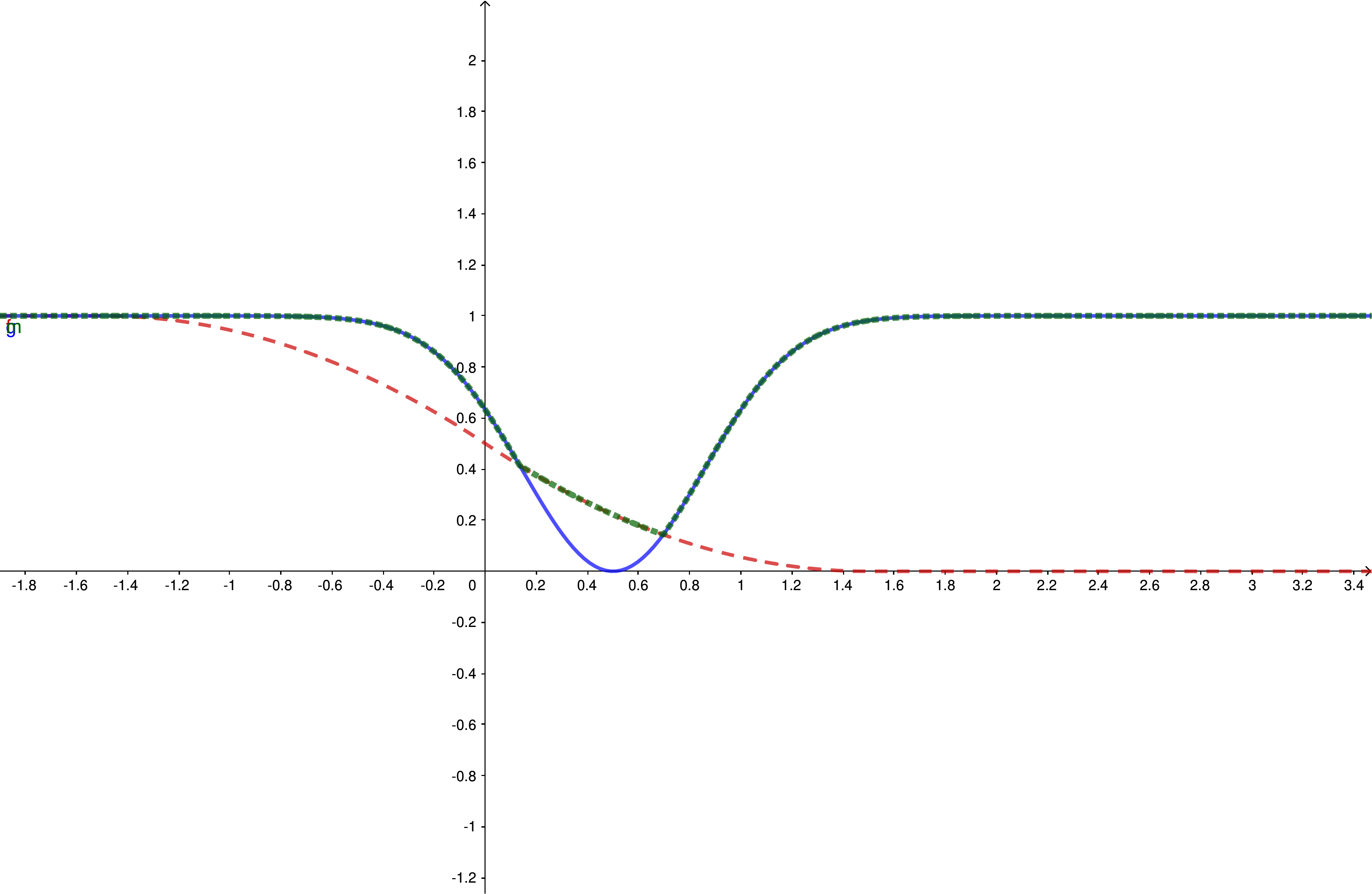}
    \caption{Lower Approximation Curve: red line is $\mu_{T, \Delta}(\psi)$ and blue line is $\overline{S}_w(t, \psi)$} and green line is their maximum.
    \label{fig:lower_approx}
\end{figure}

To see the behaviour of $M_{\underline{\gamma}_s}(t)$ when $(s-2w -\Delta) \leq t \leq (s +\Delta)$, we consider $M_{\underline{\gamma}_s}(t)$ as a minimizer of the function $\max\left( \overline{S}_w(t, \psi), \mu_{s, \Delta}(\psi) \right)$ with respect to $\psi$. This is shown by the upper envelope curve of $\overline{S}_w(t, \psi)$ and $\mu_{s, \Delta}(\psi)$ in figure~\ref{fig:lower_approx}. Thus, it is easy to see that the minimizer would appear at a point $t^\ast$ where the fuzzy membership function and the tolerance function crosses each other, i.e., $\overline{S}_w(t, t^\ast) = \mu_{s, \Delta}(t^\ast)$. This leads us to a result depicting a much easier way to compute the lower and upper approximations.

\begin{theorem}
    \label{thm:exact-upper-lower}
    With the membership function $\mu_{s, \Delta}(t)$, given as in eqn.~\ref{eqn:fuzzy-membership}, the lower and upper approximations of the left partition of a chosen changepoint $s$ are expressed by $\underline{\gamma_s} = \left\{ (u, {M}_{\underline{\gamma_s}}) : u \in \mathbb{U} \right\}$ and $\overline{\gamma_s} = \left\{ (u, {M}_{\overline{\gamma_s}}) : u \in \mathbb{U} \right\}$ respectively, where
    
    \begin{equation*}
        M_{\underline{\gamma_s}}(t) = \begin{cases}
            1 & \text{ if } t < (s - 2w - \Delta)\\
            1 - \displaystyle\max_{ \{ t^\ast : S_w(t, t^\ast) + \mu_{s, \Delta}(t^\ast) = 1 \} } S_w(t, t^\ast) & \text{ if } (s - 2w - \Delta) \leq t < (s + \Delta)\\
            0 & \text{ if } t \geq (s + \Delta)\\
        \end{cases}
    \end{equation*}
    
    and 
    
    \begin{equation*}
        M_{\overline{\gamma_s}}(t) = \begin{cases}
            1 & \text{ if } t < (s - \Delta)\\
            \displaystyle\max_{ \{ t^\ast : S_w(t, t^\ast) = \mu_{s, \Delta}(t^\ast) \} } S_w(t, t^\ast) & \text{ if } (s - \Delta) \leq t < (s + 2w + \Delta)\\
            0 & \text{ if } t \geq (s + 2w + \Delta)\\
        \end{cases}
    \end{equation*}
\end{theorem}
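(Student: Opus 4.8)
\emph{Proof proposal.} The plan is to argue directly from the implicit definitions in eqn.~\ref{eqn:set-approx-membership}, namely $M_{\underline{\gamma_s}}(t) = \inf_{\psi \in \mathbb{U}} \max(\overline{S}_w(t,\psi), \mu_{s,\Delta}(\psi))$ and $M_{\overline{\gamma_s}}(t) = \sup_{\psi \in \mathbb{U}} \min(S_w(t,\psi), \mu_{s,\Delta}(\psi))$ with $\overline{S}_w = 1 - S_w$, using nothing about $\mu_{s,\Delta}$ (eqn.~\ref{eqn:fuzzy-membership}) beyond its shape: it is continuous and non-increasing, identically $1$ on $(-\infty, s-\Delta]$ and identically $0$ on $[s+\Delta,\infty)$; and nothing about $S_w$ beyond the facts that $S_w(t,t)=1$, that $S_w(t,\cdot)$ is non-increasing in $|\psi-t|$, and that $S_w(t,\psi)=0$ once $|\psi-t|\ge 2w$, so that $\overline{S}_w(t,\cdot)$ is a ``valley'' equal to $0$ at $\psi=t$ and rising to $1$ on both sides. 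The three extreme regimes ($t<s-2w-\Delta$ and $t\ge s+\Delta$ for the lower approximation, $t<s-\Delta$ and $t\ge s+2w+\Delta$ for the upper one) are exactly the cases already disposed of in the paragraphs preceding the statement, so the whole content is the middle regime, which I would treat by splitting the index set at $\psi=t$.

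For $M_{\underline{\gamma_s}}(t)$ with $s-2w-\Delta\le t<s+\Delta$: on $\{\psi\le t\}$ both $\overline{S}_w(t,\psi)$ and $\mu_{s,\Delta}(\psi)$ are non-increasing in $\psi$, hence so is their pointwise maximum, and its infimum over this half is attained at $\psi=t$, equal to $\max(0,\mu_{s,\Delta}(t))=\mu_{s,\Delta}(t)$. On $\{\psi\ge t\}$, $\overline{S}_w(t,\psi)$ rises from $0$ while $\mu_{s,\Delta}(\psi)$ falls from $\mu_{s,\Delta}(t)>0$ (this is where $t<s+\Delta$ enters) down to $0$, so $\overline{S}_w(t,\cdot)-\mu_{s,\Delta}(\cdot)$ is non-decreasing, strictly negative at $\psi=t$ and eventually positive; hence it has a zero $t^\ast>t$, a crossing with $\overline{S}_w(t,t^\ast)=\mu_{s,\Delta}(t^\ast)$, equivalently $S_w(t,t^\ast)+\mu_{s,\Delta}(t^\ast)=1$. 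The maximum coincides with $\mu_{s,\Delta}$ (non-increasing) to the left of $t^\ast$ and with $\overline{S}_w(t,\cdot)$ (non-decreasing) to its right, so the infimum over $\{\psi\ge t\}$ equals the crossing value $1-S_w(t,t^\ast)=\mu_{s,\Delta}(t^\ast)$, which is $\le\mu_{s,\Delta}(t)$; thus this is the global infimum. To recover the coordinate-free form of the theorem I would observe that on the crossing set $\{S_w(t,\cdot)+\mu_{s,\Delta}(\cdot)=1\}$ one has $S_w(t,\cdot)=1-\mu_{s,\Delta}(\cdot)$, so maximizing $S_w(t,\cdot)$ there is the same as minimizing $\mu_{s,\Delta}$ there; all crossing points $\ge t$ carry the common value $\mu_{s,\Delta}(t^\ast)$ (on the zero set of a non-decreasing difference of a non-increasing and a non-decreasing function, $\mu_{s,\Delta}$ is constant), while any crossing point $<t$ carries a $\mu_{s,\Delta}$-value $\ge\mu_{s,\Delta}(t)>\mu_{s,\Delta}(t^\ast)$; hence $\max S_w(t,\cdot)=S_w(t,t^\ast)$ over that set and $M_{\underline{\gamma_s}}(t)=1-\max_{\{t^\ast:S_w(t,t^\ast)+\mu_{s,\Delta}(t^\ast)=1\}}S_w(t,t^\ast)$, as claimed.

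The formula for $M_{\overline{\gamma_s}}(t)$ in the regime $s-\Delta\le t<s+2w+\Delta$ follows by the mirror-image argument (or alternatively from Lemma~\ref{thm:approx-complement} and the reflection symmetry of $\mu_{s,\Delta}$ about $s$): on $\{\psi\ge t\}$ both $S_w(t,\psi)$ and $\mu_{s,\Delta}(\psi)$ are non-increasing, so the supremum of their minimum there is $\min(1,\mu_{s,\Delta}(t))=\mu_{s,\Delta}(t)$; on $\{\psi\le t\}$, $S_w(t,\psi)$ rises to $1$ at $\psi=t$ while $\mu_{s,\Delta}$ stays $\ge\mu_{s,\Delta}(t)$, and since $t<s+2w+\Delta$ gives $\mu_{s,\Delta}(t-2w)>0=S_w(t,t-2w)$, the difference $S_w(t,\cdot)-\mu_{s,\Delta}(\cdot)$ changes sign, yielding a crossing $t^\ast<t$ with $S_w(t,t^\ast)=\mu_{s,\Delta}(t^\ast)$; the minimum coincides with $S_w(t,\cdot)$ left of $t^\ast$ and with $\mu_{s,\Delta}$ right of it, so the supremum over $\{\psi\le t\}$ is $S_w(t,t^\ast)=\mu_{s,\Delta}(t^\ast)\ge\mu_{s,\Delta}(t)$, the global supremum, and the realizing crossing is again the one maximizing $S_w(t,\cdot)$ over $\{S_w(t,\cdot)=\mu_{s,\Delta}(\cdot)\}$, which gives the stated expression.

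I expect the only non-routine part to be the bookkeeping around two issues. First, the sketch above treats $\psi$ as a continuous variable, as in Figure~\ref{fig:lower_approx}, whereas $\mathbb{U}=\{1,\dots,T\}$ is finite; one must either extend $S_w(t,\cdot)$ and $\mu_{s,\Delta}(\cdot)$ to real arguments and read the approximations off there, or check that the infimum/supremum over integers is still attained at the lattice point nearest the continuous crossing on the appropriate side, and separately handle edge effects when $t$ lies within $2w+\Delta$ of $1$ or $T$. Second, the theorem is phrased with a $\max$ over the entire crossing set precisely because flat stretches of $S_w(t,\cdot)$ or $\mu_{s,\Delta}(\cdot)$ can turn the crossing set into an interval or create additional, ``spurious'' crossings on the wrong side of $t$; the verification that the $\max$ nevertheless picks out exactly the crossing realizing the inf (respectively sup) rests on the identities $S_w(t,\cdot)=1-\mu_{s,\Delta}(\cdot)$ and $S_w(t,\cdot)=\mu_{s,\Delta}(\cdot)$ holding on the respective crossing sets, together with monotonicity of $\mu_{s,\Delta}$, as indicated above. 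Everything else is forced by the valley/monotone structure of the two functions and amounts to a short computation.
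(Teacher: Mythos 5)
Your proposal is correct and follows essentially the same route as the paper: the extreme regimes are disposed of exactly as in the discussion preceding the theorem, and the middle regime is settled by the same upper/lower-envelope crossing-point argument (the content of Figure~\ref{fig:lower_approx}), which you merely make rigorous by splitting the index set at $\psi = t$ and invoking monotonicity of $\overline{S}_w(t,\cdot)$ and $\mu_{s,\Delta}(\cdot)$. The paper leaves implicit both the justification that the $\max$ over the full crossing set selects the correct crossing and the discrete-versus-continuous (and boundary) caveats, so your extra bookkeeping only tightens the same argument rather than departing from it.
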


It is possible to obtain a closed form solution of $S_w(t, t^\ast) = \mu_{s, \Delta}(t^\ast)$ and similar equations for some specific tolerance functions. One such specific choice is provided in the following corollary.

\begin{corollary}
    Let the membership function $\mu_{s, \Delta}(t)$ be given as in eqn.~\ref{eqn:fuzzy-membership}, and a tolerance function be given as
    \begin{equation*}
        S_w(t, t') = \begin{cases}
            0 & \text{ if } \vert t - t' \vert \geq 2w\\
            2\left[ \dfrac{t' - (t - 2w)}{2w} \right]^2 & \text{ if } (t-2w) < x < (t-w)\\
            1 - 2\left[ \dfrac{x-t}{2w} \right]^2 & \text{ if } \vert t - t' \vert \leq w\\
            2\left[ \dfrac{(t + 2w) - t'}{2w} \right]^2 & \text{ if } (t+w) < x < (t+2w).\\
        \end{cases}
    \end{equation*}
    Then the lower and upper approximations of the left and right partitions for a chosen changepoint $s$ can be obtained as $\underline{\gamma_s} = \left\{ (u, {M}_{\underline{\gamma_s}}) : u \in \mathbb{U} \right\}$,  $\overline{\gamma_s} = \left\{ (u, {M}_{\overline{\gamma_s}}) : u \in \mathbb{U} \right\}$, $\underline{\gamma_s^C} = \left\{ (u, {M}_{\underline{\gamma_s^C}}) : u \in \mathbb{U} \right\}$ and $\overline{\gamma_s^C} = \left\{ (u, {M}_{\overline{\gamma_s^C}}) : u \in \mathbb{U} \right\}$, where
    
    \begin{equation*}
        M_{\underline{\gamma}_s}(t) = \begin{cases}
            0 & \text{if } t \geq (s + \Delta)\\
            2\left[ \dfrac{(s+\Delta) - t}{2(w + \Delta)} \right]^2 & \text{if }   (s - w) \leq t < (s + \Delta) \\
            1 - 2\left[ \dfrac{(t + 2w) - (s - \Delta)}{2(w + \Delta)} \right]^2 & \text{if } (s - 2w - \Delta) \leq t < (s - w)\\
            1 & \text{if } t < (s - 2w - \Delta) 
        \end{cases}        
    \end{equation*}

    and
    
    \begin{equation*}
        M_{\overline{\gamma}_s}(t) = \begin{cases}
            0 & \text{if } t \geq (s + 2w + \Delta)\\
            2\left[ \dfrac{(s+\Delta) - (t - 2w)}{2(w + \Delta)} \right]^2 & \text{if } (s + w) \leq t < (s + 2w + \Delta)\\
            1 - 2\left[ \dfrac{t  - (s - \Delta)}{2(w + \Delta)} \right]^2 & \text{if } (s - \Delta) \leq t < (s + w)\\
            1 & \text{if } t < (s - \Delta), 
        \end{cases}    
    \end{equation*}
    
    while the approximations for the complementary set can be obtained using lemma~\ref{thm:approx-complement}.
\end{corollary}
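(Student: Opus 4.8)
The plan is to specialize Theorem~\ref{thm:exact-upper-lower} to the given tolerance function and carry out the implied optimization in closed form. Theorem~\ref{thm:exact-upper-lower} already disposes of the extreme ranges of $t$, where $M_{\underline{\gamma_s}}$ and $M_{\overline{\gamma_s}}$ equal $0$ or $1$, and in the remaining middle range it reduces the computation to (i) locating the crossing points $t^\ast$ satisfying $S_w(t,t^\ast)=\mu_{s,\Delta}(t^\ast)$ (for $M_{\overline{\gamma_s}}$) or $S_w(t,t^\ast)+\mu_{s,\Delta}(t^\ast)=1$ (for $M_{\underline{\gamma_s}}$), and (ii) taking $\max_{t^\ast} S_w(t,t^\ast)$ over those crossings. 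So the whole proof amounts to solving these crossing equations for the concrete piecewise-quadratic $S_w$ and $\mu_{s,\Delta}$ and substituting back.

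First I would record the relevant shape facts. For fixed $t$, the function $\psi\mapsto S_w(t,\psi)$ is continuous and unimodal, non-decreasing on $(-\infty,t]$ and non-increasing on $[t,\infty)$, with peak value $1$ at $\psi=t$ and support $[t-2w,t+2w]$; the function $\mu_{s,\Delta}$ is continuous and non-increasing, equal to $1$ left of $s-\Delta$ and to $0$ right of $s+\Delta$, piecewise quadratic on $[s-\Delta,s+\Delta]$. From unimodality of $S_w(t,\cdot)$ against monotonicity of $\mu_{s,\Delta}$: in $\sup_\psi\min\big(S_w(t,\psi),\mu_{s,\Delta}(\psi)\big)$, on $\psi>t$ both factors decrease so the min there is at most $\mu_{s,\Delta}(t)$, while on $\psi\le t$ one factor rises and the other falls, so the min is maximised at their unique crossing $t^\ast\le t$ and that value is at least $\mu_{s,\Delta}(t)$; hence $M_{\overline{\gamma_s}}(t)=S_w(t,t^\ast)=\mu_{s,\Delta}(t^\ast)$ with $t^\ast$ the \emph{left} crossing. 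Applying the mirror argument to $M_{\underline{\gamma_s}}(t)=\inf_\psi\max\big(\overline{S}_w(t,\psi),\mu_{s,\Delta}(\psi)\big)$, where $\overline{S}_w(t,\cdot)=1-S_w(t,\cdot)$ is a ``valley'' at $\psi=t$, pins the answer to the unique crossing $\overline{S}_w(t,t^\ast)=\mu_{s,\Delta}(t^\ast)$ lying to the \emph{right} of $t$; this is exactly the envelope picture of Figure~\ref{fig:lower_approx}, and uniqueness holds because on the relevant half one curve is monotone up and the other monotone down.

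Next I would substitute the active quadratic pieces. For $M_{\overline{\gamma_s}}(t)$ with $t$ in the middle range, the left crossing $t^\ast$ sits on a rising branch of $S_w(t,\cdot)$ and on a falling branch of $\mu_{s,\Delta}$; equating the two quadratics and taking the positive square root (legitimate because $t^\ast<t$ and $t^\ast$ lies strictly inside the relevant branch intervals) linearises the equation and yields $t^\ast=\big(\Delta t+w(s-\Delta)\big)/(w+\Delta)$, the same closed form in both branch sub-cases; plugging $t^\ast$ back into $S_w(t,\cdot)$ gives the stated $1-2\big[\tfrac{t-(s-\Delta)}{2(w+\Delta)}\big]^2$ and $2\big[\tfrac{(s+\Delta)-(t-2w)}{2(w+\Delta)}\big]^2$, the internal split at $t=s+w$ being precisely the condition $|t-t^\ast|\le w$ versus $w<|t-t^\ast|<2w$. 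The computation for $M_{\underline{\gamma_s}}(t)$ is the mirror image: the right crossing comes out to $t^\ast=\big(\Delta t+w(s+\Delta)\big)/(w+\Delta)$, and substituting into $\overline{S}_w(t,\cdot)$ produces the two claimed quadratics with internal split at $t=s-w$.

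The last step is bookkeeping: translate ``$t^\ast$ lies in the domain of the assumed branch of $S_w$'' and ``$\dots$ of $\mu_{s,\Delta}$'' into the claimed inequalities on $t$, check that the pieces agree at the breakpoints $s-2w-\Delta,\,s-w,\,s+\Delta$ (resp.\ $s-\Delta,\,s+w,\,s+2w+\Delta$) and with the boundary values $0,1$ from Theorem~\ref{thm:exact-upper-lower}, and finally read off $M_{\underline{\gamma_s^C}}$ and $M_{\overline{\gamma_s^C}}$ by direct substitution into Lemma~\ref{thm:approx-complement}. I expect no deep obstacle here: the whole content is that both crossing equations collapse to linear ones after a sign-definite square root. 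The part most likely to cause grief is exactly the region-by-region case analysis --- confirming on each subinterval which quadratic branches of $S_w$ and $\mu_{s,\Delta}$ are active, that the crossing really lands in their domains, and that the resulting piecewise formula is continuous and meshes with the boundary regimes of Theorem~\ref{thm:exact-upper-lower} --- together with making the left-crossing/right-crossing selection rigorous from unimodality rather than merely reading it off the figure.
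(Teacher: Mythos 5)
Your proposal is correct and follows exactly the route the paper intends: the corollary is presented as a direct specialization of Theorem~\ref{thm:exact-upper-lower}, and your reduction to the crossing equations, their linearization after taking the sign-definite square roots (yielding $t^\ast=(\Delta t+w(s\mp\Delta))/(w+\Delta)$), and back-substitution reproduce precisely the stated piecewise formulas, with the complement read off from Lemma~\ref{thm:approx-complement}. The paper itself leaves this computation implicit, so your write-up is, if anything, more detailed than the paper's own treatment, including the careful justification of the left/right crossing selection that the paper only supports via Figure~\ref{fig:lower_approx}.
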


\begin{lemma}
    \label{thm:approx-symmetric}
    If the tolerance function $S_{w}(u, v)$ can be expressed as a function of the absolute difference of its arguments, i.e. $S_w(u, v) = g(\vert u - v\vert)$ such that $g(\cdot)$ is symmetric about $0$, then;
    \begin{align*}
        M_{\underline{\gamma}_s}(t) & = 1 - M_{\overline{\gamma}_s}(2s-t) \ \ \forall\  t\ :\ \max\{1, 2s-T\}<t<\max\{2s, T\} 
    \end{align*}
\end{lemma}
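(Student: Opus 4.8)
The plan is to exploit two symmetries: a \emph{point symmetry of the fuzzy membership function} about $(s,\tfrac12)$, and a \emph{reflection invariance of the tolerance function} about $s$. First I would record that the S-shaped membership $\mu_{s,\Delta}$ of eqn.~\ref{eqn:fuzzy-membership} satisfies $\mu_{s,\Delta}(2s-t) = 1-\mu_{s,\Delta}(t) = \overline{\mu}_{s,\Delta}(t)$ for every $t$; this is immediate from the four pieces of the definition, since $t\mapsto 2s-t$ carries each quadratic branch onto the other (the two branches have the same profile) and swaps the flat pieces $1$ and $0$. Next, because $S_w(u,v) = g(\vert u-v\vert)$ with $g$ even, the reflection $\psi\mapsto 2s-\psi$ preserves distances, so $S_w(2s-t,\,2s-\psi) = g(\vert \psi-t\vert) = S_w(t,\psi)$.

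With these in hand, I would start from $M_{\overline{\gamma}_s}(2s-t) = \sup_{\psi' \in \mathbb{U}} \min\bigl(S_w(2s-t,\psi'),\, \mu_{s,\Delta}(\psi')\bigr)$ (eqn.~\ref{eqn:set-approx-membership}) and perform the change of index $\psi' = 2s-\psi$. Using the two symmetries this rewrites the expression as $\sup_{\psi \in 2s-\mathbb{U}} \min\bigl(S_w(t,\psi),\, \overline{\mu}_{s,\Delta}(\psi)\bigr)$, where $2s-\mathbb{U} := \{2s-T,\dots,2s-1\}$. On the other side, running the same De Morgan manipulation used in the proof of Lemma~\ref{thm:approx-complement} gives $M_{\underline{\gamma}_s}(t) = 1 - \sup_{\psi \in \mathbb{U}} \min\bigl(S_w(t,\psi),\, \overline{\mu}_{s,\Delta}(\psi)\bigr)$ (equivalently $M_{\underline{\gamma}_s}(t) = 1 - M_{\overline{\gamma}_s^C}(t)$). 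Hence the claimed identity $M_{\underline{\gamma}_s}(t) = 1 - M_{\overline{\gamma}_s}(2s-t)$ reduces to showing that the supremum of $f(\psi) := \min\bigl(S_w(t,\psi),\, \overline{\mu}_{s,\Delta}(\psi)\bigr)$ over the index set $\mathbb{U}$ equals its supremum over the reflected index set $2s-\mathbb{U}$.

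To close this gap I would localize the support of $f$: $f(\psi) > 0$ forces $\vert t-\psi\vert < 2w$ (from $S_w$) and $\psi > s-\Delta$ (from $\overline{\mu}_{s,\Delta}$), so $f$ vanishes outside the window $\bigl(\max(t-2w,\,s-\Delta),\ t+2w\bigr)$. On the common part $\mathbb{U}\cap(2s-\mathbb{U}) = \{\max(1,2s-T),\dots,\min(T,2s-1)\}$ the two suprema visibly agree, so it suffices to check that neither supremum picks up anything extra from indices outside this common part. This is exactly where the hypothesis $\max\{1,2s-T\} < t < \max\{2s,T\}$ is used: splitting into the cases $2s\le T$ and $2s>T$, the bound on $t$ forces the positivity window of $f$ --- and in particular the index at which the supremum is attained --- to lie inside $\{\max(1,2s-T),\dots,\min(T,2s-1)\}$, so the tails of $\mathbb{U}$ and of $2s-\mathbb{U}$ contribute only zeros. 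I expect this boundary bookkeeping --- pinning down why the stated range of $t$ is the correct one and verifying it in each case --- to be the only genuine obstacle; the symmetry reductions of the first two steps are routine once the point symmetry of $\mu_{s,\Delta}$ is observed.
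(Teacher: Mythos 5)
Your first two reduction steps are correct and, in substance, reproduce what the paper does in its transition-zone case: the point symmetry $\mu_{s,\Delta}(2s-\psi)=1-\mu_{s,\Delta}(\psi)$, the reflection invariance $S_w(2s-t,2s-\psi)=S_w(t,\psi)$, the substitution $\psi'=2s-\psi$, and the De Morgan step via Lemma~\ref{thm:approx-complement} are exactly the ingredients of the paper's computation. The structural difference is that the paper never argues from the raw suprema in eqn.~\ref{eqn:set-approx-membership}: it first passes to the closed-form, crossing-point description of Theorem~\ref{thm:exact-upper-lower} (so the approximations are treated as explicit curves, unaffected by truncation at the ends of $\{1,\dots,T\}$), disposes of the two flat regimes $t<s-2w-\Delta$ and $t\ge s+\Delta$ by inspection, and applies the symmetry computation only in between.

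The genuine gap is the final localization step, which you yourself identify as the crux. The claim that the stated range $\max\{1,2s-T\}<t<\max\{2s,T\}$ forces the positivity window of $f(\psi)=\min(S_w(t,\psi),\overline{\mu}_{s,\Delta}(\psi))$, or even its argmax, into $\mathbb{U}\cap(2s-\mathbb{U})$ is false. Take $T=1000$, $s=100$, $w=\Delta=10$, $t=500$: the hypothesis reads $1<t<1000$ and holds, but the window is $(480,520)$ while $\mathbb{U}\cap(2s-\mathbb{U})=\{1,\dots,199\}$; one gets $\sup_{\psi\in\mathbb{U}}f(\psi)=1$ (at $\psi=t$, where $\overline{\mu}_{s,\Delta}=1$) versus $\sup_{\psi\in 2s-\mathbb{U}}f(\psi)=0$, so the two suprema you must equate genuinely differ. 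Indeed, under the literal finite-$\mathbb{U}$ definitions the identity itself fails there: $M_{\underline{\gamma}_s}(500)=0$ while $M_{\overline{\gamma}_s}(-300)=0$, not $1$. So no amount of boundary bookkeeping closes this step as stated; the right-hand bound in the lemma is presumably meant to be the symmetric condition with a minimum (so that $2s-t$ also lies in the observation window), and even then a raw-definition argument can break when $s$ sits within about $2w+\Delta$ of an endpoint of $\{1,\dots,T\}$, because the crossing point that determines one of the approximations may fall outside $\mathbb{U}$. The lemma is really an identity between the untruncated approximation curves of Theorem~\ref{thm:exact-upper-lower}, and the clean way to finish is the paper's: prove it at the level of those closed forms (extreme cases by inspection, middle case by the symmetry computation you already have), rather than trying to match suprema over $\mathbb{U}$ and $2s-\mathbb{U}$.
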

\begin{proof}
    We can rewrite theorem~\ref{thm:exact-upper-lower}, as follows, 
    \begin{equation*}
        M_{\overline{\gamma_s}}(t) = \begin{cases}
            1 & \text{ if } t < (s - \Delta)\\
            \displaystyle\max_{ \{ t^\ast : S_w(t, t^\ast) = \mu_{s, \Delta}(t^\ast) \} } \mu_{s, \Delta}(t^\ast) & \text{ if } (s - \Delta) \leq t < (s + 2w + \Delta)\\
            0 & \text{ if } t \geq (s + 2w + \Delta)\\
        \end{cases}
    \end{equation*}
    To prove this lemma, we consider the 3 cases separately.
    \begin{enumerate}
        \item $t < s-2w-\Delta$
        \item $t \geq s+\Delta$
        \item $t \in [s-2w-\Delta,\ s+\Delta)$
    \end{enumerate}
    
    \begin{enumerate}
        \item \emph{Case 1: $t < s-2w-\Delta$}
        Here, $M_{\underline{\gamma}_s}(t)=1$ and $(2s-T) > (s+\Delta+2w)$. Thus, $M_{\overline{\gamma}_s}(2s-t) = 0 = 1 - M_{\underline{\gamma}_s}(t)$ which proves lemma~\ref{thm:approx-symmetric} for this case.
        \item \emph{Case 2: $t \geq s+\Delta$}
        Here, $M_{\underline{\gamma}_s}(t)=0$ and $(2s-T) \leq (s-\Delta)$. Again, symmetric to the previous case, $M_{\overline{\gamma}_s}(2s-t) = 1 = 1 - M_{\underline{\gamma}_s}(t)$ which proves lemma~\ref{thm:approx-symmetric} for case 2 also.
        \item \emph{Case 3: $(s - \Delta) \leq t < (s + 2w + \Delta)$}
        In this case, $(2s-T) \in (s-\Delta,\ s+\Delta+2w]$. Hence,
        \begin{align*}
            M_{\overline{\gamma}_s}(2s-t) &= \max_{ \{ t^\ast : S_w(2s-t, t^\ast) = \mu_{s, \Delta}(t^\ast) \} } \mu_{s, \Delta}(t^\ast)\\
            &=\max_{ \{ t^\ast : S_w(2s-t, t^\ast) = 1-\mu_{s, \Delta}(2s-t^\ast) \} } \left(1-\mu_{s, \Delta}(2s-t^\ast)\right)\\
            & \qquad \qquad \qquad \qquad \qquad \qquad \text{Since, }\mu_{s, \Delta}(t^\ast)=1- \mu_{s, \Delta}(2s-t^\ast)\\
            &=\max_{ \{ t^\ast : S_w(2s-t, t^\ast) + \mu_{s, \Delta}(2s-t^\ast)=1 \} } S_w(2s-t, t^\ast)\\
            &= \max_{ \{ z^\ast : S_w(2s-t, 2s-z^\ast) + \mu_{s, \Delta}(z^\ast)=1 \} } S_w(2s-t, 2s-z^\ast)\\
            & \qquad \qquad \text{Putting } z^\ast=2s-t^\ast\\
            &= \max_{ \{ z^\ast : S_w(t, z^\ast) + \mu_{s, \Delta}(z^\ast)=1 \} } S_w(t, z^\ast)\\
            & \qquad \qquad \qquad \qquad \qquad \text{Since, } S_w(\cdot, \cdot)\text{ is location invariant and }\\
            & \qquad \qquad \qquad \qquad \qquad g(\cdot) \text{ is symmetric about } 0\\
            &= 1- M_{\underline{\gamma}_s}(t)
        \end{align*}    
        So, lemma~\ref{thm:approx-symmetric} also holds for case 3, proving that the result is true in general.
    \end{enumerate}
\end{proof}
The significance of lemma~\ref{thm:approx-symmetric} is that, in case of symmetric and location invariant tolerance function, only one, either lower or upper, approximation curve is required to be computed.

\subsection{Asymptotic distribution}\label{sec:asymptotics}

While the rough-fuzzy CPD can be employed and a single changepoint can be detected using eqn.~\ref{eqn:new-cp}, multiple changepoints can be detected by local minima of the curve $H^E_{\Delta, \delta, w}(s)$ as a function of $s$. However, a reference curve must be computed in order to select the true changepoints from many local minimas. In statistical language, these reference curve is usually computed based on the distribution of the statistic under a suitably chosen null hypothesis, by modifying the problem into a hypothesis testing framework.

Considering the mathematical framework given in eqn.~\ref{eqn:cp-description}, we can formulate the problem of detecting changepoint as a hypothesis testing problem.
\begin{align*}
    H_0: & \quad F_0 = F_1 = \dots = F_k = F\\
    H_1: & \quad \text{There is at least one inequality}
\end{align*}

Since the regularity measure $R(t)$ is an indicator of a possible changepoint, hence under the null hypothesis $H_0$; $\E(R(t))$ is a constant independent of the time $t$. On the basis of this, we obtain asymptotic null distribution of the proposed statistic under some reasonable assumptions on the asymptotic null distribution of the regularity measure. Clearly, the asymptotics follow when the number of samples for constructing $R(t)$ i.e. $\delta$ is tended to infinity, which forces the total number of timepoints $T$ to be tended to infinity as well. Thus, in order to talk about asymptotics, we restrict ourselves to an infinite dimensional normed space, which without any loss of generality, can be taken as $l^{\infty}(\N)$, the set of all uniformly bounded functions from $\N$ to $\R$. Also, while the number of timepoints $T$ increases to infinity, the intervals between successive observations tend to zero, in order to make the total period of observation constant pertaining to most of the practical situations. 

Let us denote the regularity measure $R_{\delta_n}$ as a $l^{\infty}(\N)$ valued random element defined on some probability measure space $(\Omega, \mathcal{A}, \prob)$ such that $R_{\delta_n}(t)$ is a $\R$-valued random variable denoting the regularity measure based on the $2\delta_n$ subsamples centered at $t$ for each $t = 1, 2, \dots \infty$. 

\begin{theorem}
    \label{thm:delta-method-single}
    Assume that under the null hypothesis that there is no changepoint, the regularity measure $R_{\delta_n}$ has an asymptotic distribution such that for some sequence $a_{\delta_n} \rightarrow \infty$, 
    \begin{equation*}
        a_{\delta_n}(R_{\delta_n} - \mu \bm{1}(\cdot)) \rightsquigarrow Z
    \end{equation*}
    where $\bm{1} \in l^{\infty}(\N)$ is the identity function $\bm{1}(x) = x$, $Z$ is an infinite dimensional Gaussian process with mean function identically equal to $0$ and covariance function $\sigma : \N \times \N \rightarrow [0, \infty)$.
    
    Also, assume the following regularity conditions:
    
    \begin{enumerate}
        \item $\mu \neq 0$.
        \item For the proposed changepoint $s$, the series $\sum_{t = 1}^{\infty} M_{\overline{\gamma_s}}(t)$ and $\sum_{t = 1}^{\infty} M_{\overline{\gamma_s^C}}(t)$ are convergent.
    \end{enumerate}
    
    Define, 
        \begin{align*}
            b(s) & = \dfrac{\sum_{t = 1}^{\infty} M_{\underline{\gamma_s}}(t) }{\sum_{t = 1}^{\infty} M_{\overline{\gamma_s}}(t) }\\
            \overline{b}(s) & = \dfrac{\sum_{t = 1}^{\infty} M_{\underline{\gamma_s^C}}(t) }{\sum_{t = 1}^{\infty} M_{\overline{\gamma_s^C}}(t) }
        \end{align*}
    
    Then, the exponential entropy based statistic $H^E_{\Delta, \delta_n, w}(s)$ defined in eqn.~\ref{eqn:new-entropy} has an asymptotic distribution with the same normalizing constant, $a_{\delta_n}$ such that as $a_{\delta_n} \rightarrow \infty$,
    \begin{equation*}
        a_{\delta_n}(H^E_{\Delta, \delta_n, w}(s) - H^\ast(s)) \rightsquigarrow Z^\ast
    \end{equation*}
    where 
    \begin{equation*}
        H^\ast(s) = \left( 1 - b(s) \right)e^{b(s)} + (1 - \overline{b}(s))e^{\overline{b}(s)}
    \end{equation*}
    and $Z^\ast$ is a univariate normally distributed random variable with mean $0$ and variance $\sigma^\ast$ where;
    
    \begin{equation*}
        \sigma^\ast = \sum_{m = 1}^\infty \sum_{n = 1}^\infty A_s(m)\sigma(m, n)A_s(n)
    \end{equation*}
    
    and;
    \begin{equation}
        \begin{split}
            A_s(n) 
            & = \left[b(s)e^{b(s)}
            \left\{ 
            \dfrac{\sum_{t = 1}^{\infty} M_{\overline{\gamma_s}}(n) M_{\underline{\gamma_s}}(t) - \sum_{t = 1}^{\infty} M_{\underline{\gamma_s}}(n) M_{\overline{\gamma_s}}(t) }{\mu \left(\sum_{t = 1}^{\infty} M_{\overline{\gamma_s}}(t)\right)^2}
            \right\} + \right. \\
            & \qquad \qquad \qquad \qquad
            \left. \overline{b}(s)e^{\overline{b}(s)}
            \left\{ 
            \dfrac{\sum_{t = 1}^{\infty} M_{\overline{\gamma_s^C}}(n) M_{\underline{\gamma_s^C}}(t) - \sum_{t = 1}^{\infty} M_{\underline{\gamma_s^C}}(n) M_{\overline{\gamma_s^C}}(t) }{\mu \left(\sum_{t = 1}^{\infty} M_{\overline{\gamma_s^C}}(t)\right)^2}
            \right\}
            \right]
        \end{split}
        \label{eqnthm:A-function}
    \end{equation}
    
    provided that the series expression of $\sigma^\ast$ is convergent.
\end{theorem}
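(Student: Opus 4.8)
The plan is to recognize $H^E_{\Delta,\delta_n,w}(s)$ as a fixed (nonlinear) functional evaluated at the random element $R_{\delta_n}$ and to apply the functional delta method. Although $\delta_n$ appears in the notation, the map sending the sequence $\left(R(1),R(2),\dots\right)$ to the exponential entropy of eqn.~\ref{eqn:new-entropy} depends only on $s,\Delta,w$; call it $\phi=\phi_{s,\Delta,w}$, so $H^E_{\Delta,\delta_n,w}(s)=\phi(R_{\delta_n})$. I would factor $\phi=g\circ L$, where $L:l^{\infty}(\N)\to\R^{4}$ collects the four weighted sums
\begin{equation*}
 L(R)=\Bigl(\textstyle\sum_{t} M_{\underline{\gamma_s}}(t)R(t),\ \sum_{t} M_{\overline{\gamma_s}}(t)R(t),\ \sum_{t} M_{\underline{\gamma_s^C}}(t)R(t),\ \sum_{t} M_{\overline{\gamma_s^C}}(t)R(t)\Bigr),
\end{equation*}
and $g:\{(n_1,d_1,n_2,d_2): d_1\neq 0,\ d_2\neq 0\}\to\R$ is the smooth map $g(n_1,d_1,n_2,d_2)=(1-n_1/d_1)e^{\,n_1/d_1}+(1-n_2/d_2)e^{\,n_2/d_2}$, obtained by substituting $\rho_{\Delta,\delta,w}(\gamma_s)=1-n_1/d_1$ and $\rho_{\Delta,\delta,w}(\gamma_s^C)=1-n_2/d_2$ into eqn.~\ref{eqn:new-entropy}. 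Under $H_0$ the hypothesis $a_{\delta_n}(R_{\delta_n}-\mu\bm 1)\rightsquigarrow Z$ forces $R_{\delta_n}\rightsquigarrow\mu\bm 1$, and $L(\mu\bm 1)=\mu\bigl(\sum_t M_{\underline{\gamma_s}}(t),\sum_t M_{\overline{\gamma_s}}(t),\sum_t M_{\underline{\gamma_s^C}}(t),\sum_t M_{\overline{\gamma_s^C}}(t)\bigr)$, so that $g(L(\mu\bm 1))=(1-b(s))e^{b(s)}+(1-\overline b(s))e^{\overline b(s)}=H^\ast(s)$, the $\mu$'s cancelling in each ratio.

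Next I would establish Hadamard differentiability of $\phi$ at $\mu\bm 1$. By regularity condition~2 the sequences $M_{\overline{\gamma_s}}$ and $M_{\overline{\gamma_s^C}}$ are summable, and since $0\le M_{\underline{\gamma_s}}(t)\le M_{\overline{\gamma_s}}(t)$ (lower approximation dominated by upper; Lemma~\ref{thm:approx-complement} is convenient for the complementary pair) and likewise for the complement, all four coefficient sequences lie in $\ell^{1}(\N)$; hence each component of $L$ is a bounded linear functional on $l^{\infty}(\N)$, so $L$ is globally Hadamard differentiable with derivative $L$ itself. Because $\mu\neq 0$ (regularity condition~1) and $\sum_t M_{\overline{\gamma_s}}(t),\sum_t M_{\overline{\gamma_s^C}}(t)>0$, the point $L(\mu\bm 1)$ lies in the open domain of $g$, where $g\in C^{1}$, hence $g$ is Fréchet, thus Hadamard, differentiable there. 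By the chain rule $\phi=g\circ L$ is Hadamard differentiable at $\mu\bm 1$ tangentially to all of $l^{\infty}(\N)$, with derivative the bounded linear functional $h\mapsto \sum_{n} A_s(n)\,h(n)$, where $A_s(n)=\partial_{n_1}g\cdot M_{\underline{\gamma_s}}(n)+\partial_{d_1}g\cdot M_{\overline{\gamma_s}}(n)+\partial_{n_2}g\cdot M_{\underline{\gamma_s^C}}(n)+\partial_{d_2}g\cdot M_{\overline{\gamma_s^C}}(n)$, all partials evaluated at $L(\mu\bm 1)$. Using $\tfrac{d}{du}\bigl[(1-u)e^{u}\bigr]=-ue^{u}$ with $u=n_i/d_i$, together with $\partial_{n_i}(n_i/d_i)=1/d_i$ and $\partial_{d_i}(n_i/d_i)=-n_i/d_i^{2}$, one collects the two contributions for each ratio and checks that they reduce precisely to eqn.~\ref{eqnthm:A-function}, the surviving $\mu$ appearing in the denominators exactly as displayed; moreover $A_s\in\ell^{1}(\N)$ by the summability above.

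Finally I would invoke the functional delta method (e.g.\ van der Vaart \& Wellner, Thm.~3.9.4, or van der Vaart, Thm.~20.8): since $a_{\delta_n}(R_{\delta_n}-\mu\bm 1)\rightsquigarrow Z$ with $Z$ a tight Borel random element of $l^{\infty}(\N)$, and $\phi$ is Hadamard differentiable at $\mu\bm 1$ tangentially to the support of $Z$, we get $a_{\delta_n}\bigl(H^E_{\Delta,\delta_n,w}(s)-H^\ast(s)\bigr)=a_{\delta_n}\bigl(\phi(R_{\delta_n})-\phi(\mu\bm 1)\bigr)\rightsquigarrow \phi'_{\mu\bm 1}(Z)=\sum_n A_s(n)Z(n)$. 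Because $\phi'_{\mu\bm 1}$ is a continuous linear functional (its coefficient sequence lies in $\ell^{1}\subset (l^{\infty})^{\ast}$) and $Z$ is a mean-zero Gaussian process with covariance $\sigma(\cdot,\cdot)$, the image $Z^\ast:=\phi'_{\mu\bm 1}(Z)$ is a univariate mean-zero Gaussian with $\var(Z^\ast)=\sum_{m}\sum_{n}A_s(m)\,\sigma(m,n)\,A_s(n)=\sigma^\ast$, finite by the standing assumption that this series converges; interchanging expectation and the $\ell^{1}$-summation is justified by that convergence.

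The main obstacle is the first half of the second paragraph: establishing that the weighted-sum functionals $L$ are genuinely continuous on $l^{\infty}(\N)$ — this is exactly what regularity condition~2 (via domination of lower approximations by upper ones) is for — and that the entropy map $g$ stays away from its singular locus $\{d_1=0\}\cup\{d_2=0\}$, which is where regularity condition~1 ($\mu\neq 0$) enters. Everything downstream — the chain-rule identification of $A_s(n)$ with eqn.~\ref{eqnthm:A-function} and the Gaussianity of $Z^\ast$ — is then routine, modulo the abstract tightness/measurability of $Z$ in the non-separable space $l^{\infty}(\N)$, which is built into the hypothesis.
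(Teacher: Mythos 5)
Your proposal is correct and rests on the same engine as the paper --- Hadamard differentiability of the entropy functional at $\mu\bm{1}$ plus an infinite-dimensional delta method, with the derivative identified as the continuous linear functional $h \mapsto \sum_n A_s(n)h(n)$ and the Gaussian limit pushed forward to the univariate normal with variance $\sigma^\ast$ --- but your decomposition is genuinely different and somewhat cleaner. The paper differentiates the nonlinear roughness ratio $\rho^{(s)} : l^\infty(\N) \to \R$ of eqn.~\ref{eqnthm:roughness-function} directly, proving Fr\'echet differentiability by an explicit remainder estimate (bounding the error term by a constant multiple of $\Vert h \Vert_\infty^2/\mu$, which is where the assumptions $\mu \neq 0$, $M_{\underline{\gamma_s}}(t) \leq M_{\overline{\gamma_s}}(t)$ and the summability of the upper approximations enter), and only then composes with the two-variable map $g(x,y) = x e^{1-x} + y e^{1-y}$, invoking Fr\'echet $\Rightarrow$ Hadamard and R\"{o}misch's delta method. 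You instead factor the whole statistic as $g \circ L$ with $L : l^\infty(\N) \to \R^4$ linear and bounded (the $\ell^1$ coefficient bound coming from regularity condition~2 together with the domination of lower by upper approximations, exactly the facts the paper uses inside its remainder estimate), so that all the infinite-dimensional differentiability is trivial and the only genuine differentiation is of a smooth function on an open subset of $\R^4$ away from $\{d_1 = 0\} \cup \{d_2 = 0\}$, which is where $\mu \neq 0$ enters for you; the chain-rule computation you sketch does reproduce eqn.~\ref{eqnthm:A-function}, since $\frac{d}{du}\left[(1-u)e^{u}\right] = -u e^{u}$ evaluated at $u = b(s)$ matches the paper's factor $b(s)e^{b(s)}$ after the quotient-rule terms are combined and one power of $\mu$ cancels. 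What the paper's route buys is an explicit quantitative bound on the derivative and remainder (useful if one wanted rates); what yours buys is the elimination of the hardest computation, since bounded linear maps are automatically Hadamard differentiable and the delta method citation (van der Vaart--Wellner in place of R\"{o}misch) covers the rest. The final step --- Gaussianity and the variance series for $Z^\ast$ --- is handled the same way in both arguments.
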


\begin{proof}
    We first fix the parameters $\Delta, w$ and fix the proposed changepoint $s$. Let us define, $\rho_{\Delta, \delta_n, w}(\gamma_s)$ as a function from $l^\infty(\N)$ to $\R$, denoted by $\rho^{(s)}$;
    \begin{equation}
        \rho^{(s)}(R_{\delta_n}) = 1 - \dfrac{\sum_{t = 1}^{\infty} M_{\underline{\gamma_s}}(t)R(t) }{\sum_{t = 1}^{\infty} M_{\overline{\gamma_s}}(t)R(t) }
        \label{eqnthm:roughness-function}
    \end{equation}
    
    We would first compute the Fr\'echet derivative of this at $\mu \bm{1}(\cdot)$. 
    \begin{align*}
        & \left\vert \rho^{(s)}(\mu \bm{1} + h) - \rho^{(s)}(\mu \bm{1})  - \dfrac{\sum_{n = 1}^{\infty}\left[\sum_{t = 1}^{\infty} M_{\overline{\gamma_s}}(n) M_{\underline{\gamma_s}}(t) - \sum_{t = 1}^{\infty} M_{\underline{\gamma_s}}(n) M_{\overline{\gamma_s}}(t) \right]h(n) }{\mu \left(\sum_{t = 1}^{\infty} M_{\overline{\gamma_s}}(t)\right)^2 } \right\vert\\
        = \quad &  \left\vert \dfrac{\sum_{n = 1}^{\infty}\left[\sum_{t = 1}^{\infty} M_{\overline{\gamma_s}}(n) M_{\underline{\gamma_s}}(t) - \sum_{t = 1}^{\infty} M_{\underline{\gamma_s}}(n) M_{\overline{\gamma_s}}(t) \right]h(n) }{\mu \left(\sum_{t = 1}^{\infty} M_{\overline{\gamma_s}}(t) \right)^2 + \left(\sum_{t = 1}^{\infty} M_{\overline{\gamma_s}}(t) \right)\left(\sum_{t = 1}^{\infty} M_{\overline{\gamma_s}}(t) h(t)\right) } \right.\\
        & \qquad \qquad \qquad \qquad - \left. \dfrac{\sum_{n = 1}^{\infty}\left[\sum_{t = 1}^{\infty} M_{\overline{\gamma_s}}(n) M_{\underline{\gamma_s}}(t) - \sum_{t = 1}^{\infty} M_{\underline{\gamma_s}}(n) M_{\overline{\gamma_s}}(t) \right]h(n) }{\mu \left(\sum_{t = 1}^{\infty} M_{\overline{\gamma_s}}(t) \right)^2 } \right\vert\\
        = \quad & \left\vert\dfrac{ \sum_{n = 1}^{\infty}\left[\sum_{t = 1}^{\infty} M_{\overline{\gamma_s}}(n) M_{\underline{\gamma_s}}(t) - \sum_{t = 1}^{\infty} M_{\underline{\gamma_s}}(n) M_{\overline{\gamma_s}}(t) \right]h(n) }{\mu \left(\sum_{t = 1}^{\infty} M_{\overline{\gamma_s}}(t) \right)^2 } \right\vert \left\vert \dfrac{\sum_{t = 1}^{\infty} M_{\overline{\gamma_s}}(t) h(t)}{ \sum_{t = 1}^{\infty} M_{\overline{\gamma_s}}(t)  (\mu + h(t))} \right\vert\\
    \end{align*}
    where $\vert \cdot \vert$ represents the usual absolute value norm on real numbers. Now, since $\mu \neq 0$, without loss of generality assume that, $\mu > \epsilon > 0$ for some small non-negative $\epsilon$. With $\Vert h\Vert_{\infty} \rightarrow 0$, we can thus make,
    \begin{equation*}
        \left\vert \dfrac{\sum_{t = 1}^{\infty} M_{\overline{\gamma_s}}(t) h(t)}{ \sum_{t = 1}^{\infty} M_{\overline{\gamma_s}}(t)  (\mu + h(t))} \right\vert \leq \left\vert \dfrac{\sum_{t = 1}^{\infty} M_{\overline{\gamma_s}}(t) h(t)}{ \sum_{t = 1}^{\infty} M_{\overline{\gamma_s}}(t)  (\epsilon / 2)} \right\vert 
        \leq B \Vert h\Vert_\infty
    \end{equation*}
    
    for sufficiently small $\Vert h \Vert_\infty$ and some finite real number $B$. On the other hand, 
    \begin{align*}
        & \left\vert \dfrac{\sum_{n=1}^\infty \left[\sum_{t = 1}^{\infty} M_{\overline{\gamma_s}}(n) M_{\underline{\gamma_s}}(t) - \sum_{t = 1}^{\infty} M_{\underline{\gamma_s}}(n) M_{\overline{\gamma_s}}(t) \right] h(n)}{\mu \left(\sum_{t = 1}^{\infty} M_{\overline{\gamma_s}}(t)\right)^2} \right\vert \\
        \leq \quad & \sum_{n=1}^\infty \left\vert \dfrac{\left[\sum_{t = 1}^{\infty} M_{\overline{\gamma_s}}(n) M_{\underline{\gamma_s}}(t) - \sum_{t = 1}^{\infty} M_{\underline{\gamma_s}}(n) M_{\overline{\gamma_s}}(t) \right] h(n)}{\mu \left(\sum_{t = 1}^{\infty} M_{\overline{\gamma_s}}(t)\right)^2} \right\vert \qquad \text{by triangle inequality}\\
        \leq \quad & \dfrac{2\Vert h\Vert_\infty}{\mu \left(\sum_{t = 1}^{\infty} M_{\overline{\gamma_s}}(t)\right)^2} \sum_{n=1}^\infty \left\vert \sum_{t = 1}^{\infty} M_{\overline{\gamma_s}}(n) M_{\underline{\gamma_s}}(t) \right\vert \qquad \text{since, } \vert h(n)\vert \leq \Vert h\Vert_{\infty} \\
        \leq \quad & \dfrac{2\Vert h\Vert_\infty}{\mu \left(\sum_{t = 1}^{\infty} M_{\overline{\gamma_s}}(t)\right)^2} \sum_{n=1}^\infty \sum_{t = 1}^\infty \left\vert M_{\overline{\gamma_s}}(n) M_{\underline{\gamma_s}}(t) \right\vert \\
        \leq \quad & \dfrac{2\Vert h\Vert_\infty}{\mu \left(\sum_{t = 1}^{\infty} M_{\overline{\gamma_s}}(t) \right)^2} \left( \sum_{n = 1}^\infty M_{\overline{\gamma_s}}(t) \right)^2 \quad \text{since, } M_{\underline{\gamma_s}}(t) \leq M_{\overline{\gamma_s}}(t) \text{ and both are nonnegative} \\
        = \quad & \dfrac{2}{\mu} \Vert h\Vert_\infty 
    \end{align*}
    
    Thus,
    \begin{equation*}
        \dfrac{1}{\Vert h\Vert_\infty}\left\vert \rho^{(s)}(\mu \bm{1} + h) - \rho^{(s)}(\mu \bm{1}) - \rho^{(s)'}(\mu \bm{1})(h) \right\vert < \dfrac{2B}{\mu} \Vert h\Vert_\infty
    \end{equation*}
    
    which goes to $0$ as $\Vert h\Vert_\infty \rightarrow 0$, where $\rho^{(s)'}(\mu \bm{1}) : l^\infty(\N) \rightarrow \R$ is;
    \begin{equation}
        \rho^{(s)'}(\mu \bm{1})(h) = \sum_{n = 1}^{\infty}\left[\dfrac{\sum_{t = 1}^{\infty} M_{\overline{\gamma_s}}(n) M_{\underline{\gamma_s}}(t) - \sum_{t = 1}^{\infty} M_{\underline{\gamma_s}}(n) M_{\overline{\gamma_s}}(t) }{\mu \left(\sum_{t = 1}^{\infty} M_{\overline{\gamma_s}}(t)\right)^2}\right] h(n)
        \label{eqnthm:frechet-diff-roughness}
    \end{equation}
    
    To see that the derivative given in eqn.~\ref{eqnthm:frechet-diff-roughness} is well defined, note that 
    
    $$\left\vert \rho^{(s)'}(\mu \bm{1})(h) \right\vert \leq \dfrac{2}{\mu} \Vert h \Vert_\infty,$$ 
    
    which shows that the infinite series is convergent as $h \in l^\infty(\N)$.
    
    This shows that the function $\rho^{(s)}$ given in eqn.~\ref{eqnthm:roughness-function} is Fr\'echet differentiable (see Miranda and Fichmann~\cite{de2005generalization} for definition) at $\mu \bm{1}$ and the derivative is given by eqn.~\ref{eqnthm:frechet-diff-roughness}. In a very similar way, the roughness measure corresponding to the completement set $\gamma_s^C$ can also be shown to be Fr\'echet differentiable at $\mu \bm{1}$ as a function of $\{ R(t): t = 0, 1, 2, \dots \}$. 
    
    Now, in order to obtain the exponential entropy as shown in eqn.~\ref{eqn:new-entropy}, we consider the function $g : [0, 1]^2 \rightarrow \R$ defined as;
    \begin{equation*}
        g(x, y) = xe^{(1 - x)} + y e^{(1 - y)}
    \end{equation*}
    
    Clearly, $g$ is Fr\'echet differentiable at every point of the domain. Finally, a chain rule~\cite{siddiqi2018functional,arora2019alternative} can be applied on the composition of $g$ and the roughness measures to show that, $\Psi_s$ is Fr\'echet differentiable at $\mu \bm{1}$ where $\Psi_s : l^\infty(\N) \rightarrow \R$ is such that $\Psi_s(R_{\delta_n}) = H^E_{\Delta, \delta_n, w}(s)$.
    
    Since Fr\'echet differentiablity implies Hadamard differentiability (see discussion followed by Definition 2.1 of Shao~\cite{shao1993}), and outputs a linear operator as the derivative, $\Psi_s$ is also Hadamard differentiable with the derivative given by;
    \begin{equation*}
        \Psi'_s\vert_{\mu \bm{1}(\cdot)}(h) = \sum_{n = 1}^{\infty}A_s(n)h(n)
    \end{equation*}
    
    where the function $A(n)$ is as given in eqn.~\ref{eqnthm:A-function}. Clearly, this is a linear and continuous map.
    
    The proof of the result now follows from an infinite dimensional generalization of delta method (Theorem 1 and Section 1.4 of R\"{o}misch~\cite{romisch2014delta}). Based on the discussion above, we see that due to the assumption, $Z$ is a gaussian process, and also the hadamard derivative of the function $\Psi_s$ is linear. Thus, $\Psi'_s(Z)$ is a normally distributed random variable with the mean $0$ and variance $\sigma^\ast$, as given in the statement of the theorem.
\end{proof}

An immediate extension of theorem~\ref{thm:delta-method-single} is the analogus result for multiple proposed changepoints.

\begin{corollary}
    \label{thm:delta-method-multiple}
    Under the same assumptions and conditions of theorem~\ref{thm:delta-method-single}, the vector of exponential entropy based statistic $H_{\Delta, \delta_n, w}^E(s)$ for multiple proposed changepoints $s_1, s_2, \dots s_k$ has the following asymptotic distribution as $a_{\delta_n} \rightarrow \infty$;
    
    \begin{equation*}
        a_{\delta_n} 
        \begin{bmatrix}
           H_{\Delta, \delta_n, w}^E(s_1) - H^\ast(s_1)\\
           H_{\Delta, \delta_n, w}^E(s_2) - H^\ast(s_2)\\
           \dots \\
           H_{\Delta, \delta_n, w}^E(s_k) - H^\ast(s_k)\\
        \end{bmatrix} \rightsquigarrow
        \norm_k(\bm{0}_k, \Sigma^\ast_{k\times k})
    \end{equation*}
    
    where $\bm{0}_k$ is the $k$ dimensional null vector, and the entries of the $k\times k$ dispersion matrix are given as;
    \begin{equation*}
        \left( \Sigma^\ast_{k \times k} \right)_{(i, j)} = \sum_{m=1}^\infty \sum_{n=1}^\infty A_{s_i}(m) \sigma(m, n) A_{s_j}(n) \qquad i, j = 1, 2, \dots k
    \end{equation*}
    
    In particular, if $\vert s_i - s_j \vert > (4w + 2\Delta)$, then $\left( \Sigma^\ast_{k \times k} \right)_{(i, j)} = 0$.
\end{corollary}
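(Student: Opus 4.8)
The plan is to obtain Corollary~\ref{thm:delta-method-multiple} from Theorem~\ref{thm:delta-method-single} at essentially no extra cost, by repackaging the single-changepoint argument as a vector-valued functional delta method, and then to read the sparsity pattern of $\Sigma^\ast_{k\times k}$ off the local structure of the approximation curves.

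\emph{Step 1: stack the maps.} Fix $\Delta, w$ and the proposed changepoints $s_1,\dots,s_k$, and impose the two regularity conditions of Theorem~\ref{thm:delta-method-single} at each $s_i$. For each $i$ let $\Psi_{s_i}:l^\infty(\N)\to\R$ be the functional constructed in the proof of Theorem~\ref{thm:delta-method-single}, so that $\Psi_{s_i}(R_{\delta_n})=H^E_{\Delta,\delta_n,w}(s_i)$, and define $\Psi=(\Psi_{s_1},\dots,\Psi_{s_k}):l^\infty(\N)\to\R^k$. That proof shows each $\Psi_{s_i}$ is Fr\'echet, hence Hadamard, differentiable at the constant element $\mu\bm{1}(\cdot)$ with continuous linear derivative $h\mapsto\sum_{n=1}^\infty A_{s_i}(n)h(n)$, where $A_{s_i}$ is given in eqn.~\ref{eqnthm:A-function}. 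Since differentiability of a map into $\R^k$ is differentiability coordinatewise, $\Psi$ is Hadamard differentiable at $\mu\bm{1}$ with derivative $\Psi'\vert_{\mu\bm{1}}(h)=\big(\sum_n A_{s_1}(n)h(n),\dots,\sum_n A_{s_k}(n)h(n)\big)$, a continuous linear map $l^\infty(\N)\to\R^k$. Applying the infinite-dimensional delta method (R\"omisch~\cite{romisch2014delta}) to $\Psi$ with the hypothesis $a_{\delta_n}(R_{\delta_n}-\mu\bm{1})\rightsquigarrow Z$ gives $a_{\delta_n}\big(\Psi(R_{\delta_n})-\Psi(\mu\bm{1})\big)\rightsquigarrow\Psi'\vert_{\mu\bm{1}}(Z)$, and $\Psi(\mu\bm{1})=(H^\ast(s_1),\dots,H^\ast(s_k))$ by the computation already carried out in Theorem~\ref{thm:delta-method-single}.

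\emph{Step 2: identify the limit.} Since $\Psi'\vert_{\mu\bm{1}}$ is linear and $Z$ is a mean-zero Gaussian process, the vector $\Psi'\vert_{\mu\bm{1}}(Z)$ is a jointly Gaussian random vector in $\R^k$ with mean $\bm{0}_k$ (a finite family of continuous linear functionals of a Gaussian process is jointly Gaussian). Its covariance entries are $\mathrm{Cov}\big(\sum_m A_{s_i}(m)Z(m),\sum_n A_{s_j}(n)Z(n)\big)=\sum_m\sum_n A_{s_i}(m)\,\sigma(m,n)\,A_{s_j}(n)$, where the exchange of the double summation with the expectation is justified by absolute-convergence estimates of the type used in the proof of Theorem~\ref{thm:delta-method-single} together with the assumed convergence of the $\Sigma^\ast$ series. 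This is exactly $(\Sigma^\ast_{k\times k})_{(i,j)}$, so $a_{\delta_n}(\cdot)\rightsquigarrow\norm_k(\bm{0}_k,\Sigma^\ast_{k\times k})$.

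\emph{Step 3: off-diagonal zeros, the main obstacle.} Observe that $A_s(n)$ depends on the index $n$ only through the four numbers $M_{\underline{\gamma_s}}(n),M_{\overline{\gamma_s}}(n),M_{\underline{\gamma_s^C}}(n),M_{\overline{\gamma_s^C}}(n)$ --- the $\sum_t$-quantities in eqn.~\ref{eqnthm:A-function} being constants in $n$ --- and, by Theorem~\ref{thm:exact-upper-lower} together with Lemma~\ref{thm:approx-complement}, each of these four is constant off the interval $I_s:=[\,s-2w-\Delta,\ s+2w+\Delta\,)$, whose length is $4w+2\Delta$. Hence $A_s$ varies only on $I_s$, and when $|s_i-s_j|>4w+2\Delta$ the windows $I_{s_i}$ and $I_{s_j}$ are disjoint. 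Combining this with the short-range dependence of the regularity measure under $H_0$ --- regularity measures built from non-overlapping sample windows being asymptotically uncorrelated, so that $\sigma(m,n)=0$ whenever $m$ lies near $s_i$ and $n$ near $s_j$ --- the double sum $\sum_m\sum_n A_{s_i}(m)\sigma(m,n)A_{s_j}(n)$ collapses and $(\Sigma^\ast_{k\times k})_{(i,j)}=0$. The delicate point here, which I expect to be the crux, is to control the contribution of $A_s$ on the two flat tails of $I_s$ and to pin down precisely for which index pairs $\sigma(m,n)$ vanishes, i.e.\ to make the separation $|s_i-s_j|>4w+2\Delta$ genuinely dominate the effective width of the covariance of $R_{\delta_n}$; everything else is a verbatim re-run of the proof of Theorem~\ref{thm:delta-method-single}.
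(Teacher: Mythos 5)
Your Steps 1--2 are exactly the intended route: the paper offers no separate argument for this corollary, presenting it as an immediate extension of theorem~\ref{thm:delta-method-single}, and stacking the functionals $\Psi_{s_1},\dots,\Psi_{s_k}$ into a single map $l^\infty(\N)\to\R^k$, noting coordinatewise Hadamard differentiability, applying the delta method once, and identifying the covariance of the jointly Gaussian vector $\Psi'\vert_{\mu\bm 1}(Z)$ is precisely that extension. Nothing to object to there.

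Step 3, however, contains a genuine gap, which you flag but do not resolve. From theorem~\ref{thm:exact-upper-lower} and lemma~\ref{thm:approx-complement} the four membership curves entering eqn.~\ref{eqnthm:A-function} are indeed constant off $I_s=[s-2w-\Delta,\,s+2w+\Delta)$, but constant is not zero: for $n<s-2w-\Delta$ one has $M_{\underline{\gamma_s}}(n)=M_{\overline{\gamma_s}}(n)=1$ while the complement memberships vanish, so $A_s(n)$ equals the nonzero constant $b(s)e^{b(s)}\left(\sum_t M_{\underline{\gamma_s}}(t)-\sum_t M_{\overline{\gamma_s}}(t)\right)/\left[\mu\left(\sum_t M_{\overline{\gamma_s}}(t)\right)^2\right]$, and symmetrically the right tail of $A_s$ is a nonzero constant coming from the $\gamma_s^C$ term. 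Hence $A_{s_i}(m)A_{s_j}(n)$ does not vanish off the two windows, and disjointness of $I_{s_i}$ and $I_{s_j}$ does not by itself collapse the double sum; even in the white-noise case $\sigma(m,n)=\sigma^2\bm{1}\{m=n\}$ the sum $\sum_n A_{s_i}(n)A_{s_j}(n)$ picks up the overlapping constant tails. Moreover, the ``short-range dependence of the regularity measure'' you invoke is not among the hypotheses of theorem~\ref{thm:delta-method-single} --- the kernel $\sigma$ is arbitrary subject to convergence of the $\sigma^\ast$ series --- and even a banded-covariance assumption would not help at the boundary, since when $\vert s_i-s_j\vert$ barely exceeds $4w+2\Delta$ there are adjacent indices $m\in I_{s_i}$, $n\in I_{s_j}$. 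So the final sentence of the corollary does not follow from your argument as written; making it rigorous would require either showing a cancellation of the constant tails in the bilinear form or adding explicit decay/support assumptions on $\sigma$ (e.g., tied to the window width $\delta_n$) together with a quantitative separation argument, neither of which the proposal supplies.
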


\begin{algorithm}[ht]
\SetAlgoLined
\SetKwInOut{inputfield}{Input}
\SetKwInOut{outputfield}{Output}
\inputfield{ A multivariate time series $Y_t$ for $t = 1, 2, \dots T$ with $Y_t \in \R^p$, Parameters $\delta, \Delta$ and $w$, An acceptance threshold $\alpha$ (usally $0.05$)}
\outputfield{ Estimated changepoints $s_1, s_2, \dots s_k$}
\For{$t = 1$ to $T$}{
    Create two samples $\mathcal{S}_{1t} = \{ Y_{\max\{1, t-\delta + 1\} }, \dots, Y_{t} \}$ and $\mathcal{S}_{2t} = \{ Y_{t+1}, \dots, Y_{ \min\{ T, t+\delta \} } \}$\;
    Compute regularity measure $R(t)$ as a similarity between $\mathcal{S}_{1t}$ and $\mathcal{S}_{2t}$\;
}
\For{$s = 1$ to $T$}{
    \For{$t = 1$ to $T$}{
        Use theorem~\ref{thm:exact-upper-lower} to compute $M_{\underline{\gamma_s}}(t)$\;
        Use lemma~\ref{thm:approx-symmetric} to compute $M_{\overline{\gamma_s}}(t)$\;
        Use lemma~\ref{thm:approx-complement} to compute $M_{\underline{\gamma_s^C}}(t)$ and $M_{\overline{\gamma_s^C}}(t)$\;
    }
    Compute roughness measure $\rho_{\Delta, \delta, w}(\gamma_s)$ and $\rho_{\Delta, \delta, w}(\gamma_s^C)$ using eqn.~\ref{eqn:new-roughness}\;
    Compute exponential entropy $H^E_{\Delta, \delta, w}(s)$ using eqn.~\ref{eqn:new-entropy}\;
}
Find local minimas of the sequence $\{ H^E_{\Delta, \delta, w}(s) : s = 1, 2, \dots T \}$\;
\hrulefill\\
\textit{One can output these local minimas as proposed changepoints, however if theoretical expressions for $\mu$ and $\sigma$ as denoted in theorem~\ref{thm:delta-method-single} for the underlying regularity measure $R(t)$ is available, the following testing framework can be used to further enhance its ability}\\
\hrulefill\\
\For{each local minima $s_k$ of $H^E_{\Delta, \delta, w}$}{
    Compute $H^\ast(s_k)$ as given in theorem~\ref{thm:delta-method-single} by replacing the infinite sums as sums from $t = 1$ to $T$\;
    \For{$n = 1$ to $T$}{
        Use eqn.~\ref{eqnthm:A-function} to compute $A_{s_k}(n)$ by replacing the infinite sums as sums from $t = 1$ to $T$\;
    }
    Compute $\sigma^\ast$ for the changepoint $s_k$ as given in theorem~\ref{thm:delta-method-single} using finite sums from $t = 1$ to $T$\;
    Accept $s_k$ as a changepoint if $a_{\delta}\left( H^E_{\Delta, \delta, w}(s_k) - H^\ast(s_k) \right)/\sqrt{\sigma^\ast} < z_{\alpha}$, where $z_\alpha$ is the $\alpha$-th quantile of the standard normal distribution\;
}
\caption{Rough-Fuzzy CPD}
\label{algo:main}
\end{algorithm}

On the basis of theorem~\ref{thm:delta-method-single} and theorem~\ref{thm:delta-method-multiple}, a hypothesis testing framework can be laid out to detect the significant changepoints. The final algorithm is shown in algorithm~\ref{algo:main}. Another consequence of theorem~\ref{thm:delta-method-multiple} is that if the parameters $w$ and $\Delta$ are carefully chosen so that the obtained local minima of the entropy function $H^E_{\Delta, \delta, n}(s)$ are separated by atleast $(4w + 2\Delta)$, then the entropy corresponding to these functions are asymptotically independent. Thus, for such well-separated changepoints, each of them can be tested individually for false positives.

\section{Simulation Studies}\label{sec:simulation}

\subsection{Simulation Setups}\label{sec:simulation-scenario}

In order to asses a comparative study of the proposed rough-fuzzy improvement over any changepoint detection algorithm, some simulations are performed. We consider a general model of the observations as follows:

\begin{equation*}
    y_t = \mu(t) + \epsilon_t \qquad \epsilon_t \sim \norm(0, 1)
\end{equation*}

where $\mu(t)$ is the mean function dependent on time, while $\epsilon_t$ denotes independent and identically distributed random variables, each distributed according to a standard normal distribution. We consider $3$ different types of mean curves as shown in figure~\ref{fig:cp-mean-curves}.

\begin{figure}[ht]
    \centering
    \begin{subfigure}[b]{0.33\linewidth}
        \includegraphics[width = \textwidth]{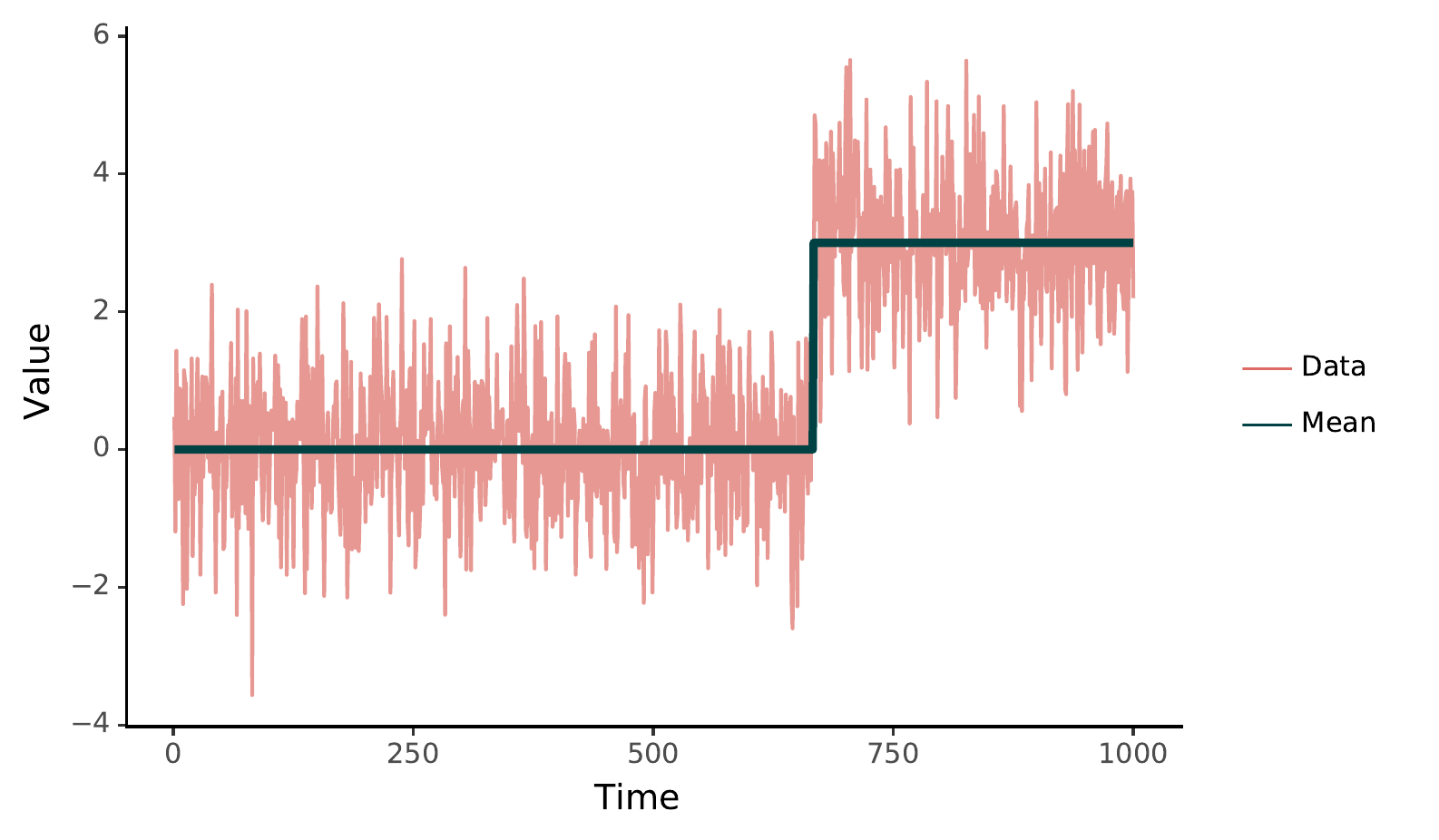}
        \caption{Discrete Jump in mean function ($S1$)}
        \label{subfig:discrete-jump}
    \end{subfigure}
    \hfill
    \begin{subfigure}[b]{0.33\linewidth}
        \includegraphics[width = \textwidth]{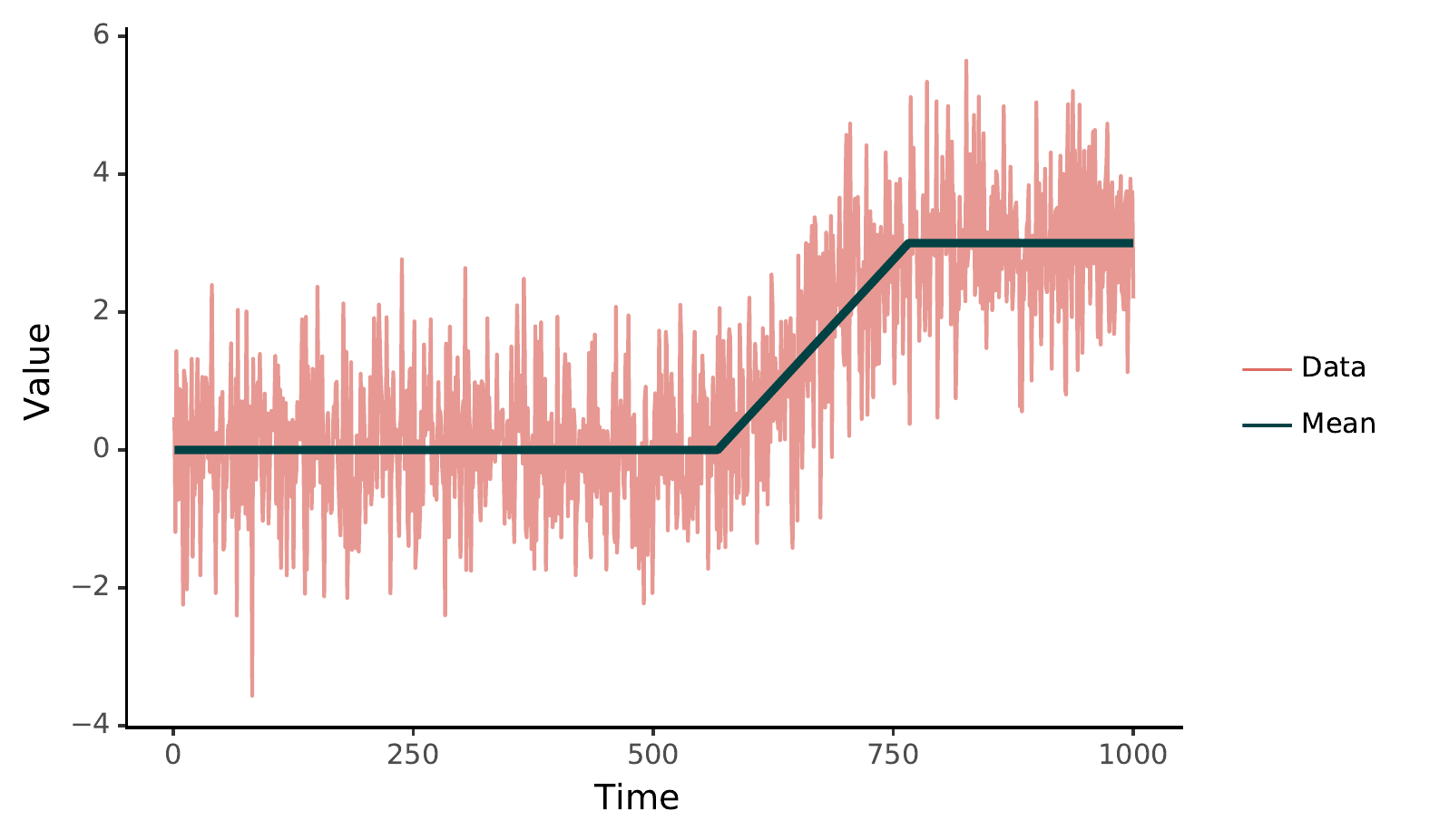}
        \caption{Continuous Jump in mean function ($S2$)}
        \label{subfig:cont-jump}
    \end{subfigure}
    \hfill
    \begin{subfigure}[b]{0.33\linewidth}
        \includegraphics[width = \textwidth]{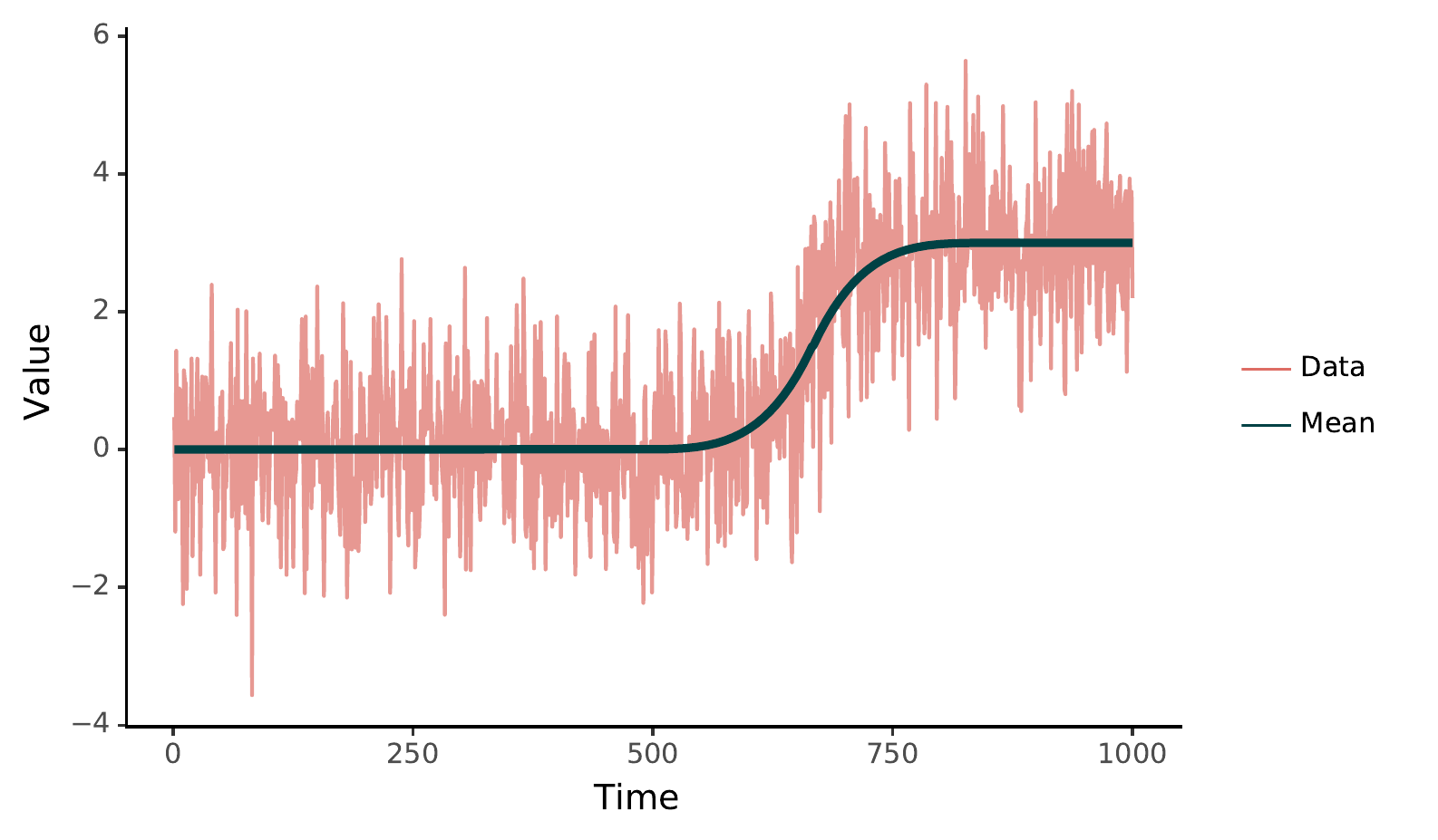}
        \caption{Smooth Jump in mean function ($S3$)}
        \label{subfig:fuzzy-jump}
    \end{subfigure}
    \caption{Different Scenarios of changepoints with Changes in mean of the data}
    \label{fig:cp-mean-curves}
\end{figure}

Figure \ref{subfig:discrete-jump} depicts the scenario $S1$ of abrupt change, which is the popular definition of a changepoint. Most of the existing algorithms are able to detect this type of change. Figure \ref{subfig:cont-jump} shows the scenario $S2$ where the mean curve is piecewise linear, and hence a test for changes in slope of the regression would be able to detect such changes. While this shows a situation where the change in slope of the regression equation has a discrete jump, situation $S3$ indicates a continual change in the slope thus incorporating a much smoother and gradual jump in the dataset as shown in subfigure~\ref{subfig:fuzzy-jump}. 

In order to establish the comparison, we consider three test statistic that are well used in a two-sample setup to detect the changes in mean. The parametric statistic as square of two sample t-test statistic, nonparametric Kolmogorov Smirnov test statistic and Augmented Dickey Fuller test statistic among unit root tests were chosen. In each case, the regularity measure $R(t)$ is obtained by taking reciprocal of the test statistics, to make sure $R(t)$ has a higher value in absence of changepoints and has a lower value in presence of changepoint. Both eqn.~\ref{eqn:old-cp} and eqn.~\ref{eqn:new-cp} were used to obtain the estimates of the changepoints, and the process was repeated for $200$ Monte Carlo resamples and used to obtain error measures. In applying rough-fuzzy CPD, $\delta = 50$ was chosen to compute the regularity measure, while different values of $w$ and $\Delta$ were used in the experiment to see their effects. 

We present the results of simulation studies for all the $3$ cases described above. All the following simulations and real data applications have been performed by the python package \hyperlink{https://pypi.org/project/roufcp/}{\texttt{roufcp}} developed by us.

\subsection{Result of Simulation}

\begin{table}[ht]
\centering
\begin{tabular}{@{}llllll@{}}
\toprule
\multicolumn{1}{c}{\multirow{2}{*}{\textbf{\begin{tabular}[c]{@{}c@{}}Simulation \\ setup\end{tabular}}}} &
  \multicolumn{1}{c}{\multirow{2}{*}{\textbf{\begin{tabular}[c]{@{}c@{}}Base test\\ statistic\end{tabular}}}} &
  \multicolumn{1}{c}{\multirow{2}{*}{\textbf{\begin{tabular}[c]{@{}c@{}}Parameters for\\ proposed method\end{tabular}}}} &
  \multicolumn{2}{c}{\textbf{RMSE}} &
  \multicolumn{1}{c}{\multirow{2}{*}{\textbf{\begin{tabular}[c]{@{}c@{}}Relative\\ Decrease in MSE\end{tabular}}}} \\ \cmidrule(lr){4-5}
\multicolumn{1}{c}{} &
  \multicolumn{1}{c}{} &
  \multicolumn{1}{c}{} &
  \multicolumn{1}{c}{\textbf{Proposed}} &
  \multicolumn{1}{c}{\textbf{Base}} &
  \multicolumn{1}{c}{} \\ \midrule
\multirow{12}{*}{\begin{tabular}[c]{@{}l@{}}Discrete\\ Jump\\ in Mean\\ (S1)\end{tabular}} &
  \multirow{4}{*}{t-test} &
  $w = \Delta = 5$ &
  1.591 &
  \multirow{4}{*}{1.752} &
  17.53\% \\
 &
   &
  $w = \Delta = 25$ &
  4.961 &
   &
  -701.81\% \\
 &
   &
  $w = \Delta = 50$ &
  6.567 &
   &
  -1436.26\% \\
 &
   &
  $w = \Delta = 100$ &
  21.446 &
   &
  -14883.88\% \\ \cmidrule(l){2-6} 
 &
  \multirow{4}{*}{KS-test} &
  $w = \Delta = 5$ &
  2.001 &
  \multirow{4}{*}{3.928} &
  74.08\% \\
 &
   &
  $w = \Delta = 25$ &
  4.926 &
   &
  -57.27\% \\
 &
   &
  $w = \Delta = 50$ &
  7.404 &
   &
  -255.29\% \\
 &
   &
  $w = \Delta = 100$ &
  25.533 &
   &
  -4125.33\% \\ \cmidrule(l){2-6} 
 &
  \multirow{4}{*}{ADF-test} &
  $w = \Delta = 5$ &
  38.115 &
  \multirow{4}{*}{106.684} &
  87.23\% \\
 &
   &
  $w = \Delta = 25$ &
  41.395 &
   &
  84.94\% \\
 &
   &
  $w = \Delta = 50$ &
  43.843 &
   &
  83.11\% \\
 &
   &
  $w = \Delta = 100$ &
  55.305 &
   &
  73.12\% \\ \midrule
\multirow{9}{*}{\begin{tabular}[c]{@{}l@{}}Continuous\\ Jump in \\ Mean\\ (S2)\end{tabular}} &
  \multirow{3}{*}{t-test} &
  $w = \Delta = 25$ &
  25.232 &
  \multirow{3}{*}{35.501} &
  49.48\% \\
 &
   &
  $w = \Delta = 50$ &
  11.326 &
   &
  89.82\% \\
 &
   &
  $w = \Delta = 100$ &
  17.419 &
   &
  75.92\% \\ \cmidrule(l){2-6} 
 &
  \multirow{3}{*}{KS-test} &
  $w = \Delta = 25$ &
  26.881 &
  \multirow{3}{*}{79.723} &
  88.63\% \\
 &
   &
  $w = \Delta = 50$ &
  15.071 &
   &
  96.42\% \\
 &
   &
  $w = \Delta = 100$ &
  23.778 &
   &
  91.10\% \\ \cmidrule(l){2-6} 
 &
  \multirow{3}{*}{ADF-test} &
  $w = \Delta = 25$ &
  207.811 &
  \multirow{3}{*}{347.614} &
  64.26\% \\
 &
   &
  $w = \Delta = 50$ &
  197.768 &
   &
  67.63\% \\
 &
   &
  $w = \Delta = 100$ &
  189.827 &
   &
  70.18\% \\ \midrule
\multirow{9}{*}{\begin{tabular}[c]{@{}l@{}}Smooth\\ Jump in\\ Mean\\ (S2)\end{tabular}} &
  \multirow{3}{*}{t-test} &
  $w = \Delta = 25$ &
  12.242 &
  \multirow{3}{*}{18.001} &
  53.75\% \\
 &
   &
  $w = \Delta = 50$ &
  9.596 &
   &
  71.58\% \\
 &
   &
  $w = \Delta = 100$ &
  18.768 &
   &
  -8.70\% \\ \cmidrule(l){2-6} 
 &
  \multirow{3}{*}{KS-test} &
  $w = \Delta = 25$ &
  11.640 &
  \multirow{3}{*}{37.714} &
  90.47\% \\
 &
   &
  $w = \Delta = 50$ &
  10.127 &
   &
  92.78\% \\
 &
   &
  $w = \Delta = 100$ &
  22.936 &
   &
  63.01\% \\ \cmidrule(l){2-6} 
 &
  \multirow{3}{*}{ADF-test} &
  $w = \Delta = 25$ &
  197.282 &
  \multirow{3}{*}{286.67} &
  52.64\% \\
 &
   &
  $w = \Delta = 50$ &
  187.246 &
   &
  57.33\% \\
 &
   &
  $w = \Delta = 100$ &
  181.6 &
   &
  59.87\% \\ \bottomrule
\end{tabular}\\
\vspace{1em}
\caption{Comparison of rough-fuzzy CPD (proposed) vs usual (base) methods based on cost minimization for 3 types of mean shift changepoints}
\label{tbl:sim-comparison}
\end{table}

Table \ref{tbl:sim-comparison} summarizes the result for all 3 types of mean shift changepoints described above. We begin analyzing the results with changepoint of type $S1$, where a discrete jump in the mean function has occurred. Note that, the hyperparameters $w$ and $\Delta$ control the amount of fuzziness and roughness to incorporate when detecting changepoints. Clearly, $w = \Delta = 0$ would entail just a transformation of $R(t)$ as the exponential entropy, and both the estimating equations eqn.~\ref{eqn:old-cp} and eqn.~\ref{eqn:new-cp} would yield the same changepoint. Thus, in general, increasing $w$ and $\Delta$ would increase RMSE (root mean square error) if the true model has a discrete jump change in mean function. Indeed, for $w = \Delta = 5$, for KS test, we have a $74\%$ decrease in MSE while with higher values of $w$ and $\Delta$, we see higher RMSE for rough-fuzzy CPD compared to the base methods using t-test and KS-test. For ADF test statistic, however, the proposed method outperforms the base method across all hyperparameter values, achieving more than $70\%$ reduction in MSE. Hence, we see it is important to use appropriate values of hyperparameters. Overall, for discrete jump type changepoints rough-fuzzy CPD performs poorly, giving higher MSE than base model for t-test and Kolmogorov Smirnov tests, and higher values of $w$ and $\Delta$ will decrease the accuracy in prediction. This is because the fundamental assumption of the proposed method viz. "change is fuzzy in nature and not abrupt" is violated in this case. Now, we look at the other 2 types of changepoints where the change is not abrupt and occurs gradually over a period of time.\par

From Table \ref{tbl:sim-comparison} we observe the performance of rough-fuzzy CPD for situation $S2$ with continuous change in mean function. For both t-test statistic and Kolmogorov-Smirnov statistic, the RMSE is much lower in rough-fuzzy CPD relative to the base method based on the regularity measure $R(t)$. However, with an increase in $w$ and $\Delta$, the RMSE reduces first and then increases, possibly suggesting an existence of optimal hyperparameters in between. Because of the specific parametric setup, Augmented Dickey Fuller test and nonparametric Kolmogorov Smirnov test generally perform worse than parametric t-test. However, in practical applications when the data generating processes are not known, ADF and KS test might perform better in conjunction with our proposed improvement.

\begin{figure}[ht]
    \centering
    \includegraphics[width = \linewidth]{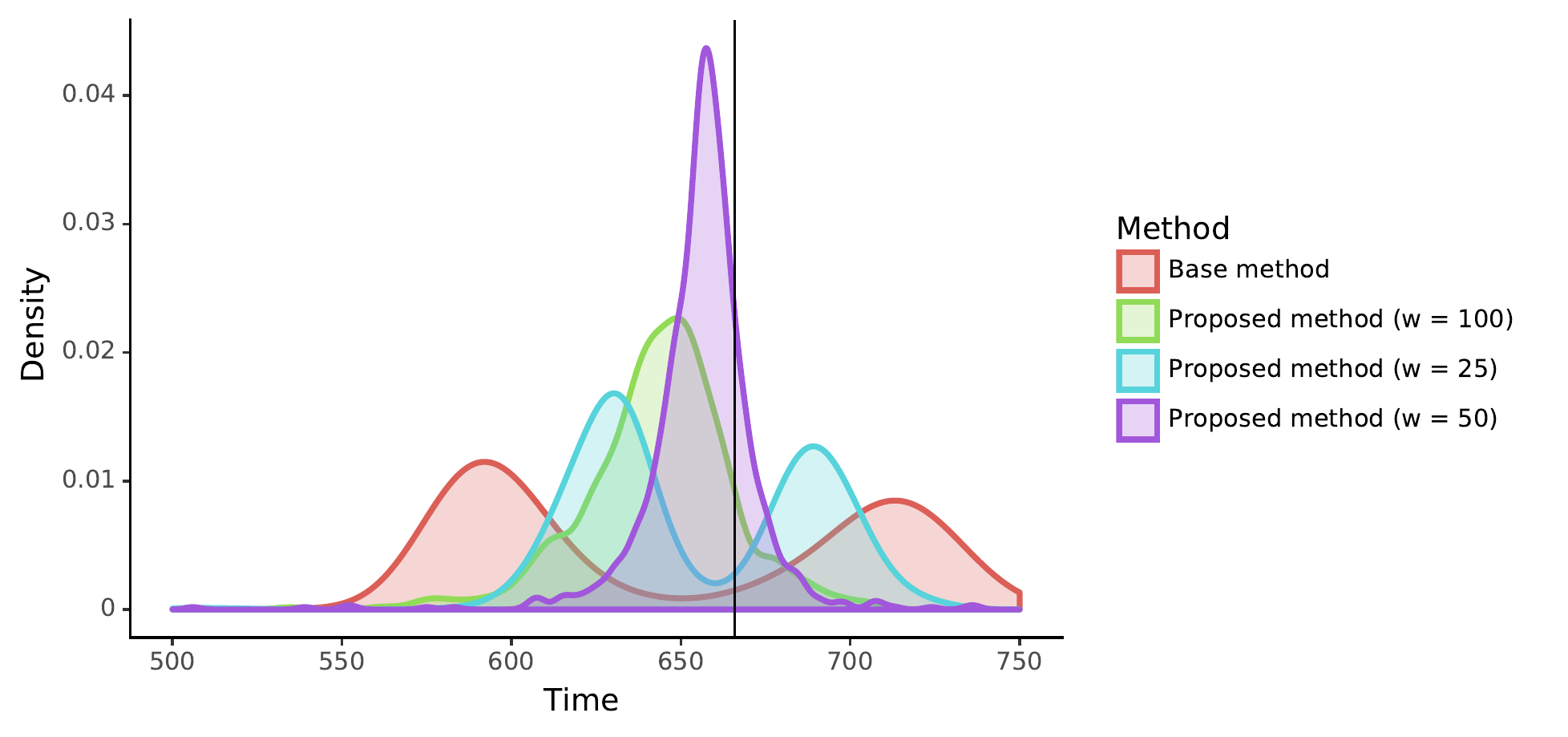}
    \caption{Base method along with proposed rough-fuzzy CPD based on the regularity measure with Augmented Dickey Fuller statistic for continuous jump changepoint ($S2$)}
    \label{fig:adf-all}
\end{figure}

An interesting phenomenon occurs when we use Augmented Dickey Fuller test statistic based regularity measure. As shown in figure~\ref{fig:adf-all}, the ADF test usually outputs two changepoints around the true changepoint at $666$; but as the fuzziness is incorporated in the model by increasing $w$ and $\Delta$, the bimodal distribution gradually becomes unimodal. However, there possibly remains a negative bias in estimation, as indicated by the mode of the density curve of the rough-fuzzy CPD for $w = \Delta = 50$. \par 

Turning to situation $S3$ with smooth change in mean function, the results are found to be similar to scenario $S2$. Here also rough-fuzzy CPD has reduced the MSE by more than $50\%$ in most cases. However, we see that for $w = \Delta = 100$ the efficiency of model is greatly reduced and it even performs slightly worse than base t-test. Again, as mentioned earlier, it is of utmost importance to choose the values of the hyperparameters $w$ and $\Delta$ carefully to obtain accurate estimates of the changepoint even in the situations where the mean function is gradual and the underlying assumption of the method is not violated. However, as shown by table \ref{tbl:sim-comparison}, rough-fuzzy CPD generally obtains higher reduction in MSE for $S2$ (continuous jump) and $S3$ (smooth jump) compared to $S1$ (discrete jump). So, with greater degree of ambiguity in changepoint, the relative performance of our model increases as expected. We shall illustrate this phenomenon in greater details and more rigorously in subsection~\ref{subsec:effect of fuzzyness}.

\subsection{Performance under different signal-to-noise ratios}\label{subsec:effect of snr}
While detecting fuzzy changepoints is a hard problem, doing so in a noisy data with low Signal-to-Noise Ratio (SNR) is even harder. Here signal-to-noise ratio is defined as 

\[ SNR = \dfrac{\mathbb{E}(S^2)}{\mathbb{E}(N^2)} \]

where $N$ is the noise component of the data and $S$ is the true signal component of the data, which mainly comprises of the mean function in the time series observations. To check the performance of rough-fuzzy CPD under different SNR's, we keep noise $N$ same as before with standard normally distributed and vary the value of signal $S$ by changing the size of the jump in the mean function. In this simulation setup, we consider scenario $S2$ with continuous change in mean function, with the jump sizes $S = = 1/5, 1/4, 1/3, 1/2, 1,2,3, 4,5, 6, 7,8, 9,10$. For each such setup, monte carlo estimates of the MSE of the estimated changepoints by the proposed method rough-fuzzy CPD and the base method using Kolmogorov-Smirnov test, are calculated based on $1000$ resamples. In each resample, the true changepoint value is kept fixed at $666$. 

\begin{figure}[ht]
    \centering
    \includegraphics[width = \linewidth]{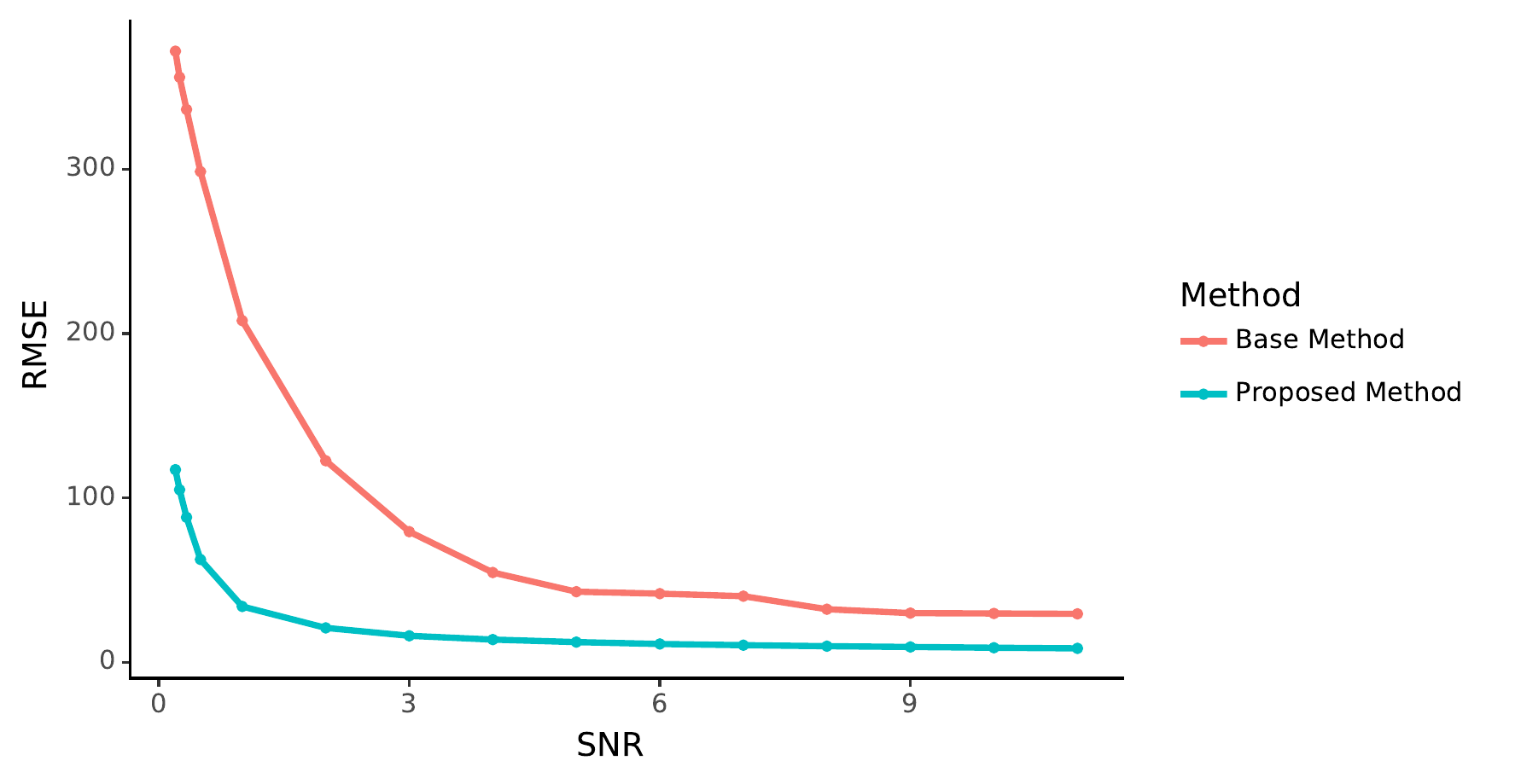}
    \caption{Performance of proposed rough-fuzzy CPD and base method (KS-test)for different values of signal-to-noise ratio}
    \label{fig:effect_of_snr}
\end{figure}

Figure \ref{fig:effect_of_snr} shows the variation of RMSE of estimated changepoint for the proposed rough-fuzzy CPD as well as the base method. We observe that rough-fuzzy CPD has obtained lower RMSE than the base method for all values of SNR. While SNR increases, the predictive capabilities of both the methods increase. However, it is important to note that while the base method performs very badly for lower values of SNR ($ < 1$), the performance of rough-fuzzy CPD using the same regularity measure is not so severely affected. In fact throughout the low and high values of SNR, the relative improvement of MSE achieved by rough-fuzzy CPD remains fairly uniform ranging from $88-97\%$, with the highest reduction achieved when SNR is 2 or 3.

\subsection{Effect of fuzziness of true changepoint on efficiency}\label{subsec:effect of fuzzyness}
The main aim of this paper is to provide a method of changepoint detection when the change is gradual and not abrupt. In our earlier simulations, scenario $S2$ and $S3$ depict such situations where such a gradual change has occurred. To check how rough-fuzzy CPD performs under different levels of graduality or "fuzziness" of change, we consider different cases of scenario $S2$ where the mean function is defined as follows;
\begin{align*}
    \mu_{s,\Delta}(t) = 
        \begin{cases}
          0 &  t \leq s_0- \mathcal{F}\\
          \dfrac{t-(s_0 -\mathcal{F})}{\mathcal{F}} & s_0 -\mathcal{F} < t \leq s_0+\mathcal{F} \\
          2 &  t > s_0+\mathcal{F}
        \end{cases}
\end{align*}
where $s_0$ is the true changepoint and the parameter $\mathcal{F}$ defines the degree of graduality or "fuzziness" of the changepoint. To see the effect of $\mathcal{F}$ on the estimate of rough-fuzzy CPD, we consider $1000$ resamples of scenario $S2$ with true changepoint at $s_0 = 666$, for $15$ different values of $\mathcal{F}$ ranging from $10$ to $150$. The MSE for each level of fuzziness $\mathcal{F}$ is calculated using a monte-carlo method, for the estimated changepoint obtained by rough-fuzzy CPD as well as the underlying base method with Kolmogorov-Smirnov statistic.

\begin{figure}[ht]
    \centering
        \begin{subfigure}[b]{0.59\linewidth}
            \includegraphics[width = \textwidth]{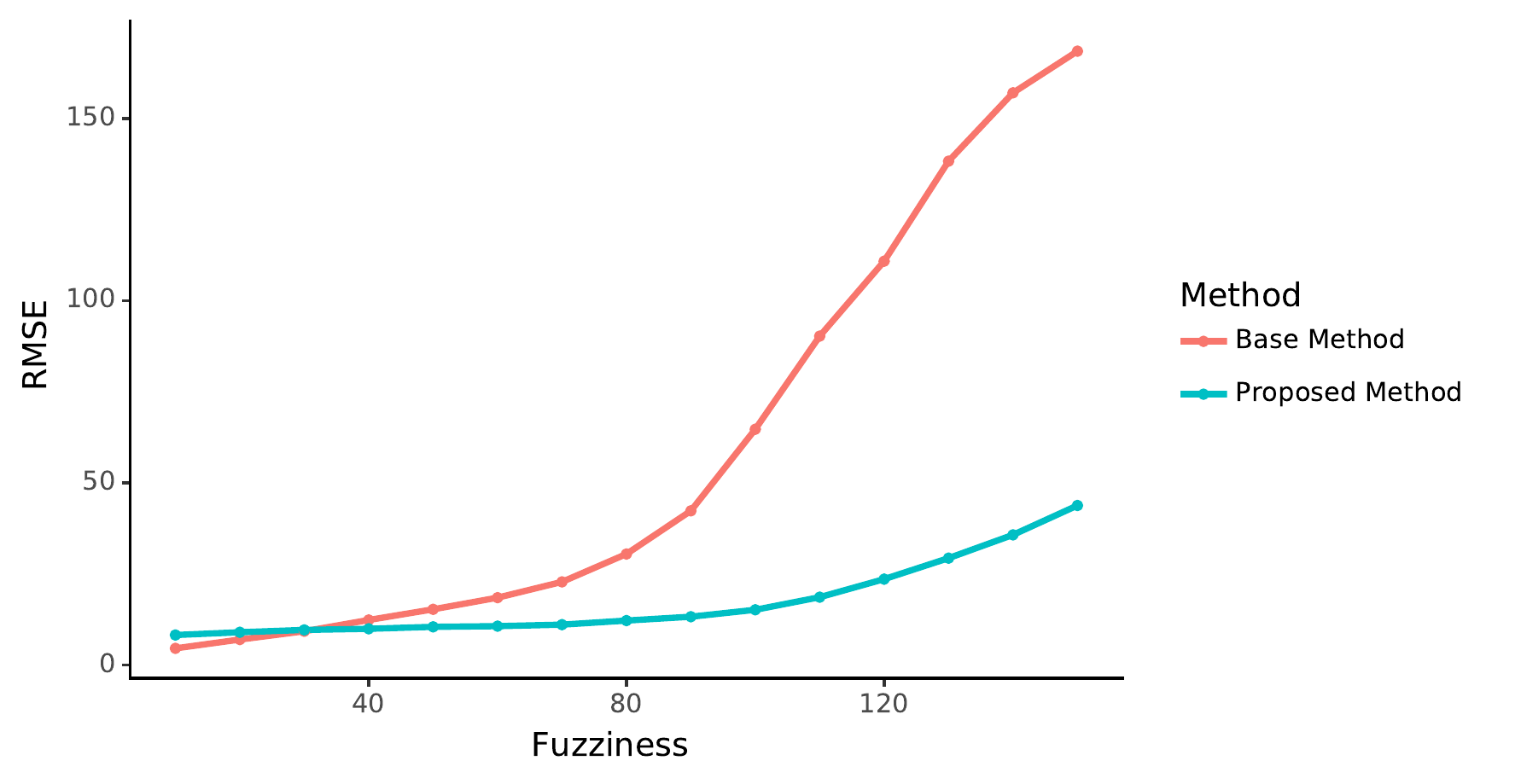}
            \caption{RMSE of base model (KS) and proposed model}
            \label{subfig:effect_of_fuzziness_rmse}
        \end{subfigure}
        \hfill
        \begin{subfigure}[b]{0.39\linewidth}
            \includegraphics[width = \textwidth]{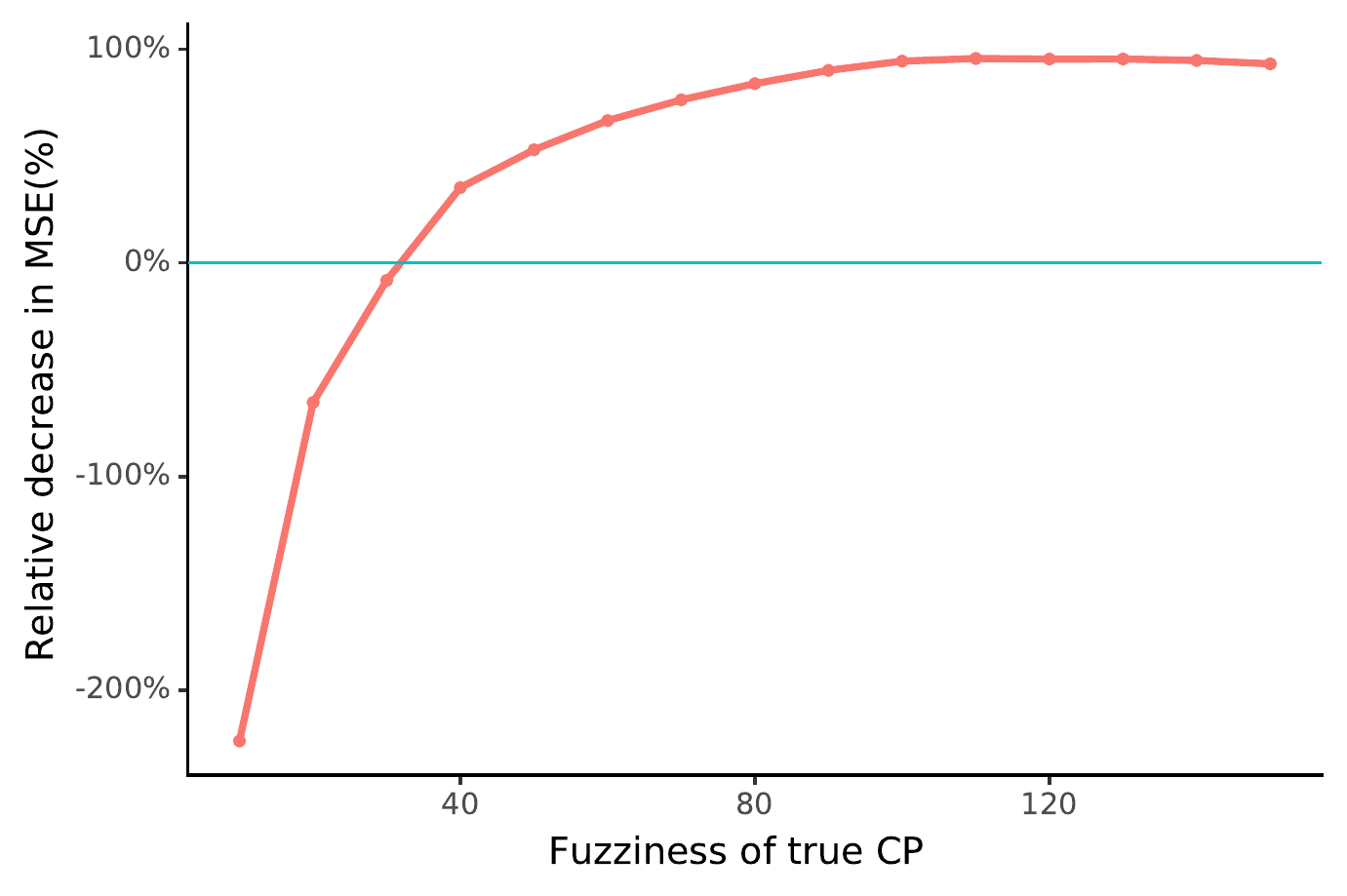}
            \caption{Relative decrease in MSE compared to base model}
            \label{subfig:effect_of_fuzziness_relative_mse}
    \end{subfigure}
    \caption{Performance of proposed rough-fuzzy CPD and base method (KS-test)for different values of fuzziness of true changepoint}
    \label{fig:effect_of_fuzziness}
\end{figure}

Figure \ref{fig:effect_of_fuzziness} shows the RMSE of estimated changepoints by rough-fuzzy CPD and its base counterpart for different values of fuzziness $\mathcal{F}$, along with the relative decrease obtained by rough-fuzzy CPD. As shown in subfigure~\ref{subfig:effect_of_fuzziness_rmse}, higher degree of fuzziness leads to higher error in estimation for both methods, though rough-fuzzy CPD is not as severely affected as the base method. Figure~\ref{subfig:effect_of_fuzziness_relative_mse} depicts that the relative performance gain by the proposed rough-fuzzy improvement is only possible when the fuzziness in the true changepoint crosses a certain threshold, about $\mathcal{F} = 31$ for our specific setup. However, as the fuzziness increases, due to extremely sensitive performance of the base method, rough-fuzzy CPD could achieves nearly $95-99\%$ performance gain in terms of MSE.

\subsection{Comparison with other methods}

To compare performance of the rough-fuzzy CPD with existing changepoint detection algorithms, we have chosen three existing methods of fuzzy changepoint detection, namely FCP algorithm~\cite{chang2015fuzzy} for regression models, Fuzzy shift changepoint (FSCP) algorithm~\cite{lu2016change} and Fuzzy classification maximum likelihood changepoint (FCMLCP) algorithm~\cite{lu2016detecting}. Complementary to that, we also consider $3$ state of the art abrupt changepoint detection algorithms, namely Wild Binary Segmentation~\cite{fryzlewicz2014wild}, Pruned Exact Linear Time (PELT)~\cite{killick2012optimal} and Bayesian Online changepoint Detection~\cite{adams2007bayesian,knoblauch2018spatio}. To compare the performances between rough-fuzzy CPD and these existing methods, we consider rough-fuzzy CPD with "best" tuning parameter in combination with Kolmogorov-Smirnov statistic, which should perform equally well under any general model due to its nonparametric nature.

\begin{table}[ht]
\centering
\begin{tabular}{@{}lll@{}}
\toprule
\textbf{Simulation setup} & \textbf{Algorithm} & \textbf{RMSE} \\ \midrule
\multirow{7}{*}{Discrete jump in Mean (S1)} & FSCP & 2.107 \\
 & FCP & 1.649 \\
 & FCMLCP & 5.53 \\
 & Wild Binary Segmentation* & 1.857 \\
 & PELT & 1.857 \\
 & BOCD & 6.23 \\
 & Rough-Fuzzy CPD (using KS statistic) & 2.001 \\ \midrule
\multirow{7}{*}{Continuous jump in Mean (S2)} & FSCP & 28.87 \\
 & FCP & 190.11 \\
 & FCMLCP & 52.39 \\
 & Wild Binary Segmentation & 18.028 \\
 & PELT & 29.036 \\
 & BOCD & 15.478 \\
 & Rough-Fuzzy CPD (using KS statistic)* & 15.071 \\ \midrule
\multirow{7}{*}{Smooth jump in Mean (S3)} & FSCP & 17.436 \\
 & FCP & 117.35 \\
 & FCMLCP & 37.29 \\
 & Wild Binary Segmentation & 14.265 \\
 & PELT & 21.424 \\
 & BOCD & 11.086 \\
 & Rough-Fuzzy CPD (using KS statistic)* & 10.127 \\ \bottomrule
\end{tabular}\\
\vspace{1em}
\caption{Performance comparison of rough-fuzzy CPD (proposed) with existing fuzzy changepoint detection methods under the simulation setups. Best method in each setup is indicated by $\ast$.}
\label{tab:compare-methods}
\end{table}

Table \ref{tab:compare-methods} shows the root mean squared error (RMSE) of the estimated changepoints by the proposed algorithm rough-fuzzy CPD and the other existing changepoint detection methods, for the simulation scenarios described in section~\ref{sec:simulation-scenario}, with the best model indicated for each scenario. As shown in table~\ref{tab:compare-methods}, in the scenario $S2$ and $S3$ where fuzziness about the changepoint is present, the proposed model outperforms the existing methods. However, in case of abrupt changepoints, PELT, Wild Binary Segmentation and FCP are more suited to the task. The Bayesian Online changepoint detection method turns out to be very close to our method in terms of performance for smooth change in mean, but at the cost of inaccurate estimate for abrupt changes. One possible reason for this phenomenon could be the underlying nature of BOCD, which being an online changepoint detection algorithm, determines presence of a changepoint only from the past data. In such situation, once a true changepoint is visited, BOCD needs to see sufficient samples from the new distributions before it can conclude such a changepoint is present, thus creating a lag between the true changepoint and the estimated changepoint. This inability in abrupt changepoint becomes beneficial in detecting gradual changepoint for BOCD.

\subsection{Sensitivity Analysis}\label{sec:sensitivity analysis}
The proposed method rough-fuzzy CPD depends on 3 main factors viz. the regularity measure $R(t)$, the tunable hyperparameters $w$ and $\Delta$. $w$ denotes the degree of roughness of the tolerance function with higher values indicating greater roughness. Similarly, the parameter $\Delta$ determines the fuzziness of the membership function with higher values corresponding to greater fuzziness. While we do not have control over the regularity measure $R(t)$ as it is exogenous to rough-fuzzy CPD, we need to choose the values of $w$ and $\Delta$ judiciously to obtain better estimates. To understand the effect of parameters $w$ and $\Delta$ we perform some sensitivity analysis.
\begin{figure}[ht]
    \centering
    \begin{subfigure}[b]{0.49\linewidth}
        \includegraphics[width = \textwidth]{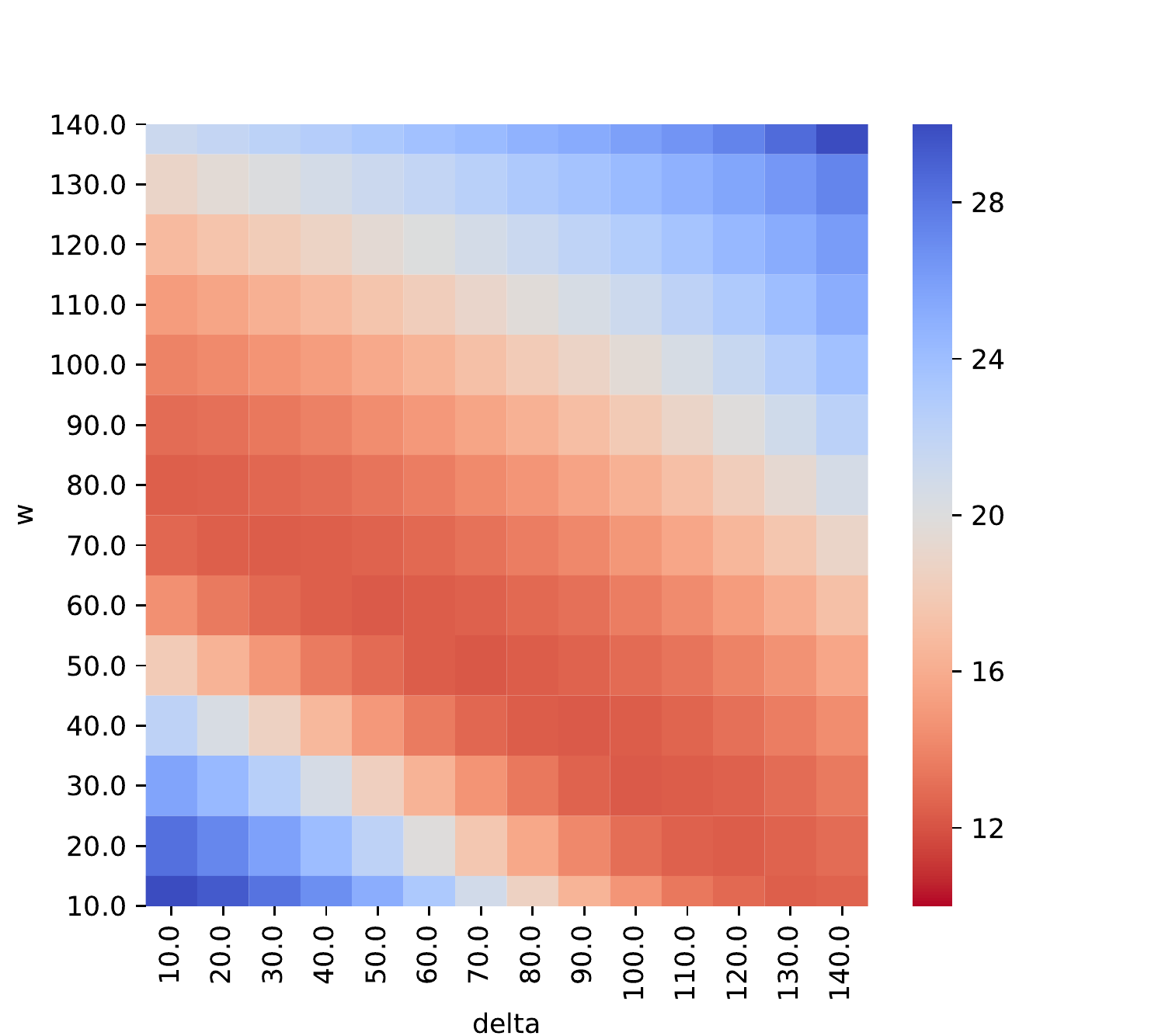}
    \caption{Variation of RMSE with $w$ and $\Delta$ for $S2$}
    \label{subfig:sensitivity_s2}
    \end{subfigure}
        \begin{subfigure}[b]{0.49\linewidth}
        \includegraphics[width = \textwidth]{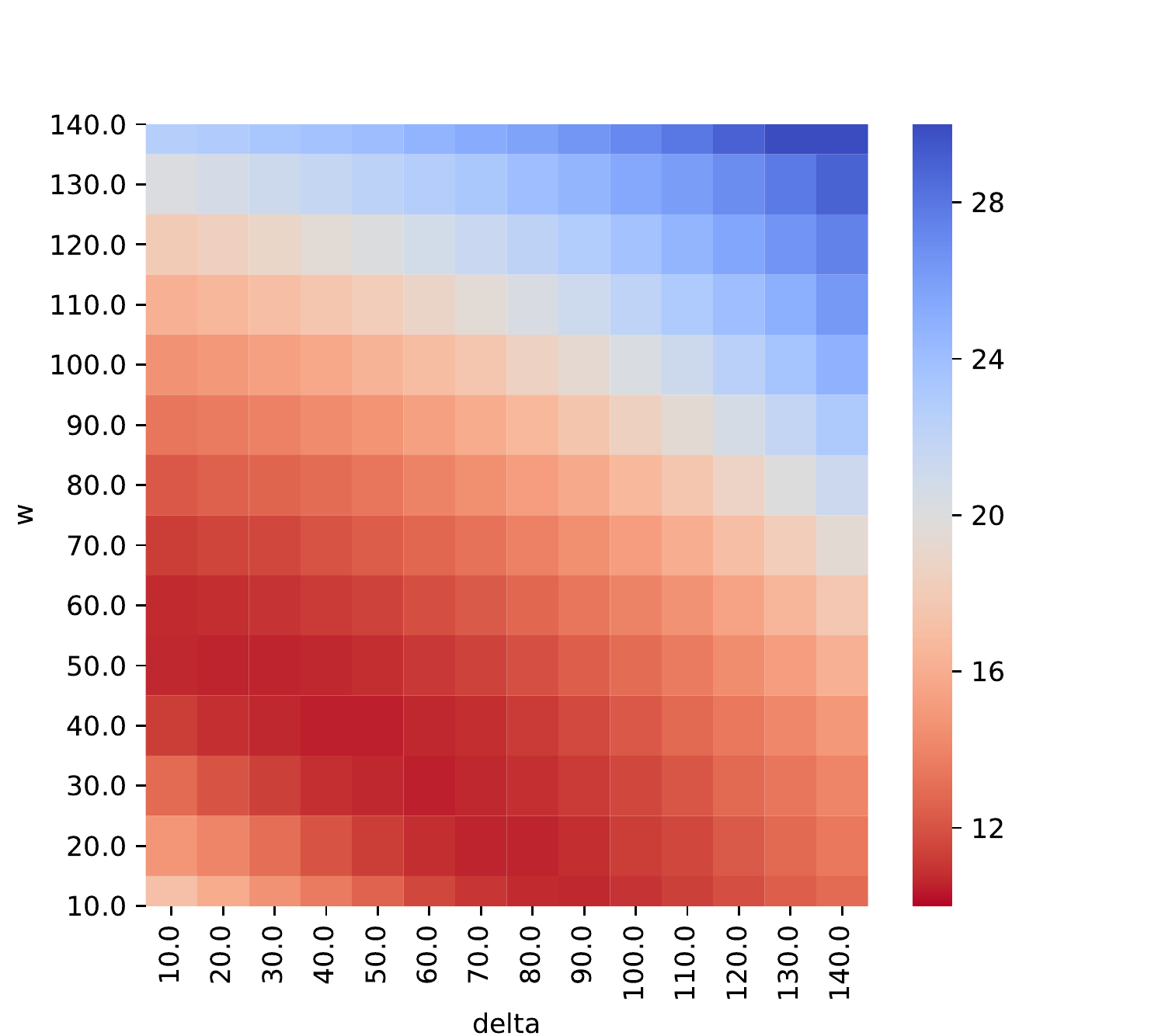}
    \caption{Variation of RMSE with $w$ and $\Delta$ for $S3$}
    \label{subfig:sensitivity_s3}
    \end{subfigure}
    \caption{Heatmap showing RMSE in estimating changepoint in scenarios $S2$ and $S3$ for different values of $w$ and $\Delta$}
    \label{fig:sensitivity_analysis}
\end{figure}
We consider 2 scenarios to study the effect of hyper parameters. Subfigure \ref{subfig:sensitivity_s2} shows the variation of RMSE for scenario $S2$ with continuous jump in mean function and subfigure \ref{subfig:sensitivity_s3} shows the variation of RMSE for scenario $S3$ with smooth jump in mean function (see figure~\ref{fig:cp-mean-curves} for reference). In both the cases, for estimation, we let the order pair $\langle w, \Delta\rangle$ take values in $\{10,20, \dots, 140\}\times \{10,20, \dots, 140\}$. It is evident from figure~\ref{fig:sensitivity_analysis} that there exists a wide range of optimal values of $w$ and $\Delta$, and too high or too low values of both these parameters together are detrimental to the efficacy of the model. 

The difference in the range of values of RMSE between subfigures \ref{subfig:sensitivity_s2} and \ref{subfig:sensitivity_s3} can be attributed to the difference in the mean curve of the data used in the $2$ scenarios. As shown by using a shared color scale, rough-fuzzy CPD is able to achieve a lower RMSE for situation $S3$ than in situation $S2$ with their corresponding optimal hyperparameter values. One possible reason could be the resemblance of the chosen membership function for rough-fuzzy CPD with the change in the mean function for situation $S3$.


\section{Application to Real Data}\label{sec:real-examples}

We consider three real data sets to show the performance of our method rough-fuzzy CPD over the usual regularity measure based methods. Two popular datasets in changepoint analysis, namely ``Flow of the River Nile" data and ``Seatbelts" data regarding monthly Road Casualties in Great Britain $1969-84$, collected from \texttt{datasets} package~\cite{durbin2001time} in \texttt{R}~\cite{R-datasets}. For the third illustration, changepoints are estimated from time varying reproduction rate from coronavirus disease of 2019 (COVID-19) incidence dataset.

\subsection{``Flow of the river Nile" dataset}

The very well-known ``Flow of the river Nile" dataset comprises measurements of the annual flow of Nile river at Aswan from 1871 to 1970. The data has a possible changepoint near 1898 which is associated with the Fashoda incident~\cite{bates1984fashoda} in the same year. 
\begin{figure}[ht]
    \centering
    \begin{subfigure}[b]{0.49\linewidth}
        \includegraphics[width = \textwidth]{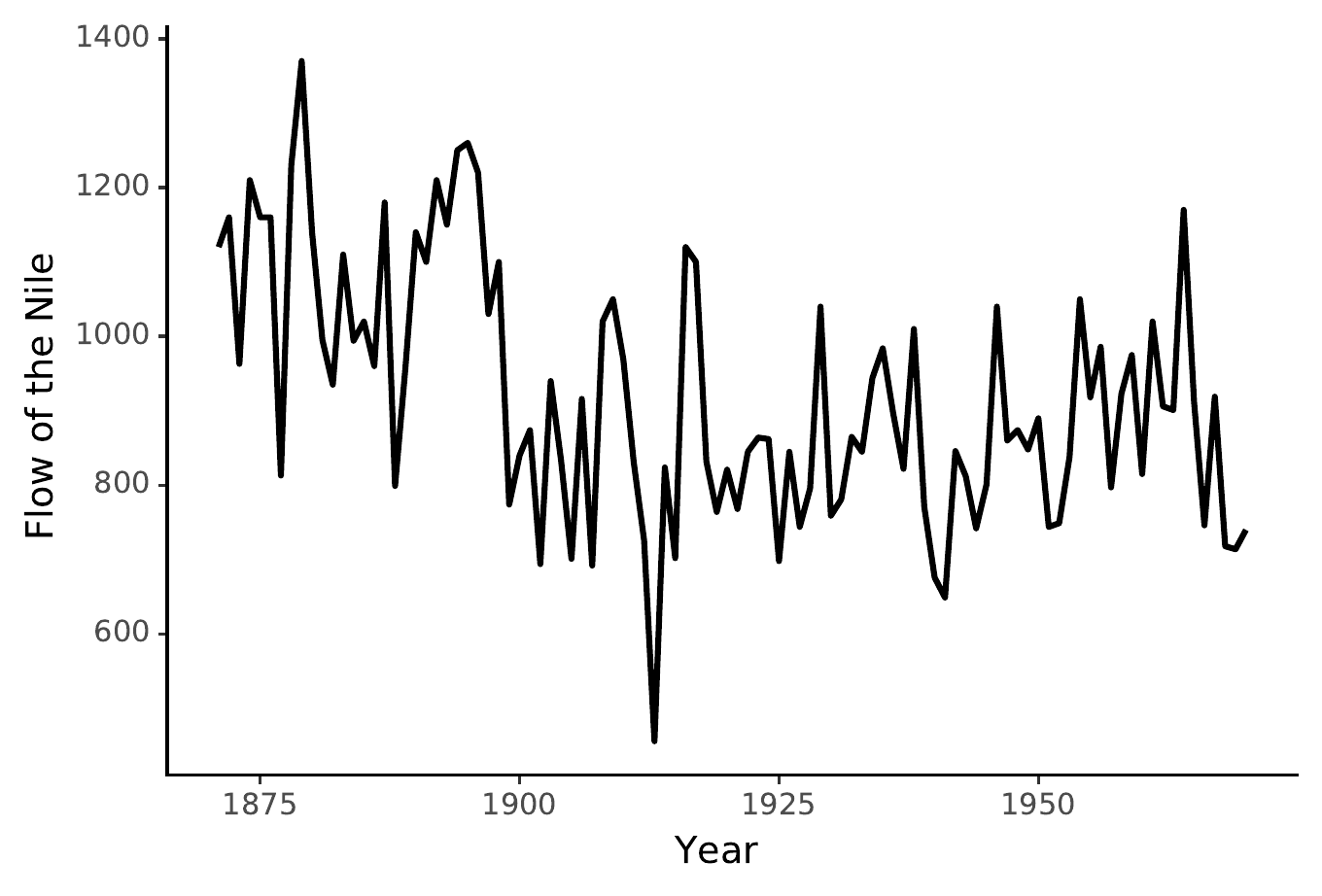}
        \caption{Observed Data}
        \label{subfig:nile_data}
    \end{subfigure}
        \begin{subfigure}[b]{0.49\linewidth}
        \includegraphics[width = \textwidth]{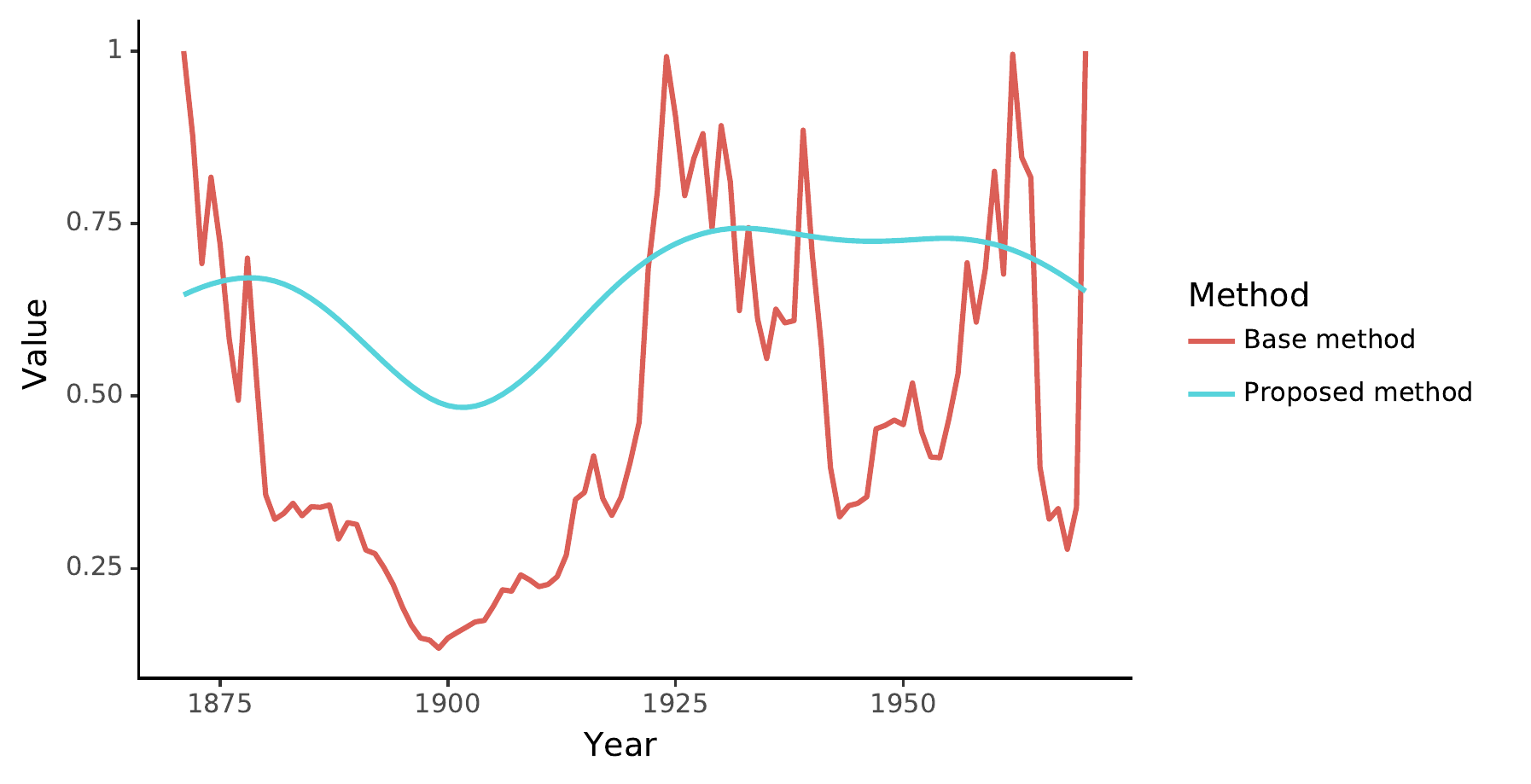}
        \caption{Detected changepoints by base and rough-fuzzy CPD with t-test statistic based regularity measure}
        \label{subfig:nile_result}
    \end{subfigure}
    \caption{Performance of proposed rough-fuzzy CPD and base method (KS-test) on Nile dataset}
    \label{fig:nile}
\end{figure}

The results for Nile dataset is shown in figure~\ref{fig:nile}, where we use t-test as a regularity measure. As shown in subfigure~\ref{subfig:nile_data}, there are multiple minimas detected by the base method, while the proposed rough-fuzzy improvement smooths out these minimas and the false positives are automatically removed. The estimated changepoint by our rough-fuzzy CPD turns out to be at $1902$, which coincides with the completion of Zifta Barrage and Assiut Barrage~\cite{wegmann1922design} and is close to the commonly believed changepoint at 1898.

\subsection{``Seatbelts" dataset }

The popular ``Seatbelts" dataset consists of the number of drivers in Great Britain wearing seatbelts during the time period Jan, $1969$ to Dec, $1984$. There are possibly two evident changepoints present in the data, first one corresponds to the start of seat belt legislation movements from $1972$ when the seatbelts were enforced compulsory in newly manufactured cars, and the second one corresponding to the compulsory enforcement of seat belt wearing while driving in $1983$~\cite{van2020evaluation}.

\begin{figure}[ht]
    \centering
    \begin{subfigure}[b]{0.49\linewidth}
        \includegraphics[width = \textwidth]{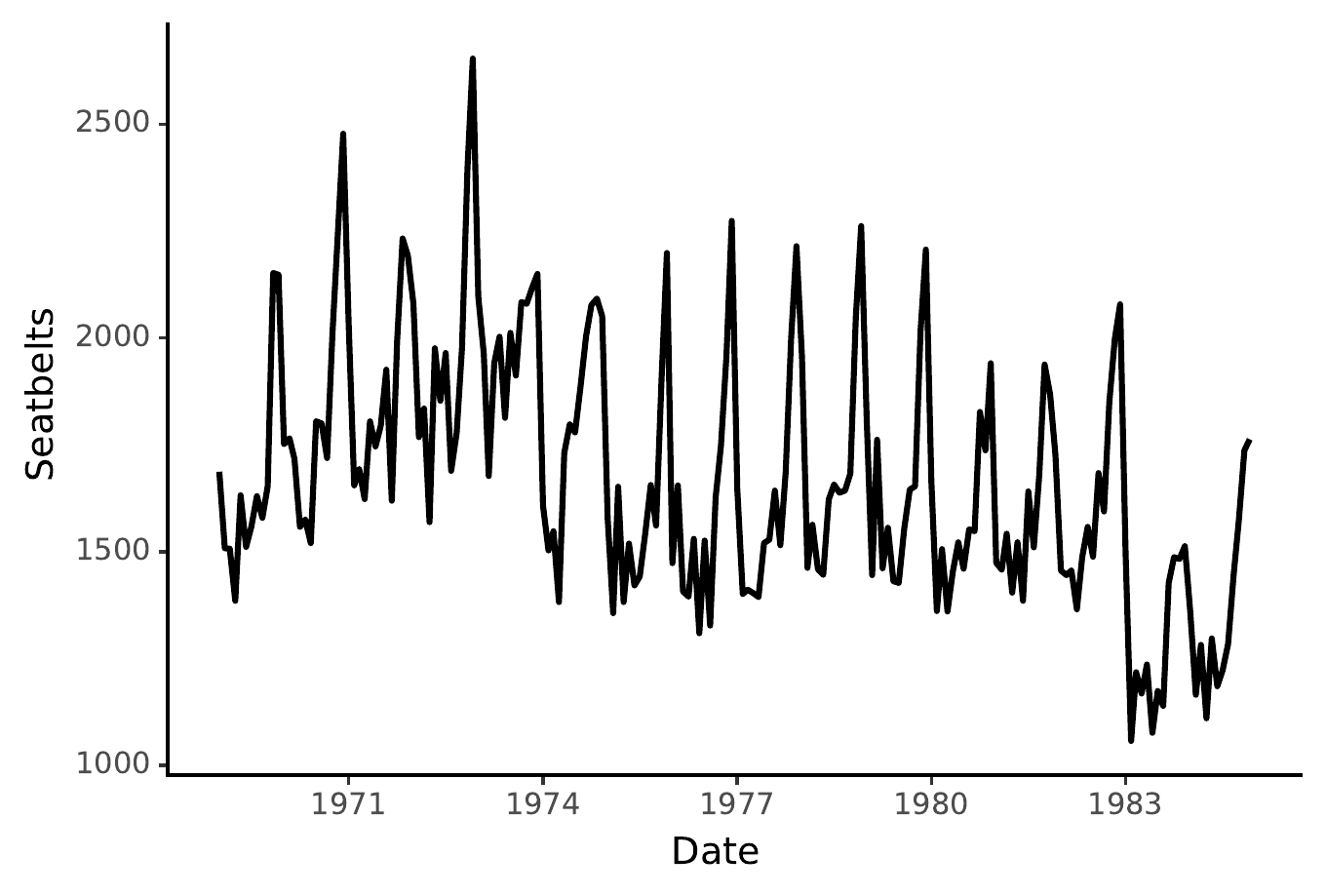}
        \caption{Observed Data}
        \label{subfig:seatbelt_data}
    \end{subfigure}
        \begin{subfigure}[b]{0.49\linewidth}
        \includegraphics[width = \textwidth]{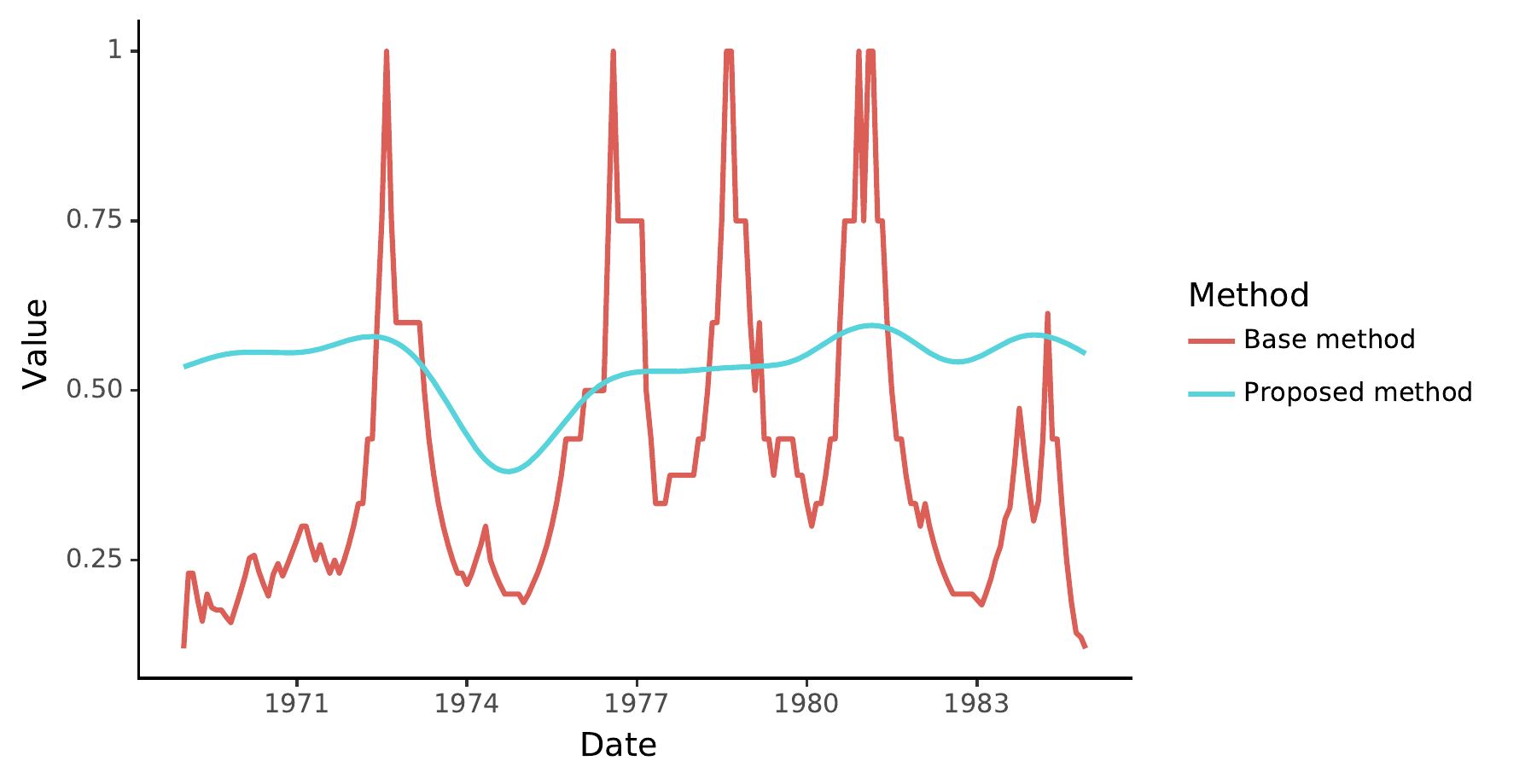}
        \caption{Detected changepoints by base and rough-fuzzy CPD with KS-test statistic based regularity measure}
        \label{subfig:seatbelt_result}
    \end{subfigure}
    \caption{Performance of proposed rough-fuzzy CPD and base method (KS-test) on Seatbelts dataset}
    \label{fig:seatbelt}
\end{figure}

Figure \ref{fig:seatbelt} explains the performance of rough-fuzzy CPD in detecting changepoints for Seatbelts data. Similar to Nile data, the regularity measure based on nonparametric Kolmogorv Smirnov test tends to have too many local minimas, each of which can be possibly thought of as an estimate of changepoint. However, in conjunction with the rough-fuzzy improvement, the prominent local minima appear in $1975$ and $1983$, both of which are in close proximity of the true changepoints.

\subsection{COVID-19 data}
Coronavirus disease of 2019 (COVID-19) is an infectious disease caused by severe acute respiratory syndrome coronavirus 2 (SARS-CoV-2) virus. First detected in Wuhan, China in December, 2019, it has spread across the globe and has infected over 32.2 Million cases by $25^{\text{th}}$ September, 2020. There has been a multitude of epidemiological models trying to estimate the spread of COVID-19 in various countries and states~\cite{Purkayastha2020}. However, many such models fail to predict case counts accurately due to the frequently changing public health measures aimed at simultaneously controlling the spread of the disease and preventing economic crisis. This makes identification of changepoints in the epidemic data of particular importance. Another use of identifying changepoints in spread of COVID-19 is inference regarding the efficacy of non pharmaceutical interventions. 

Such work has been done for some countries like USA~\cite{zhang2020change} and Germany~\cite{Dehningeabb9789}. To identify changepoints, Dehning et al.~\cite{Dehningeabb9789} calculated the time varying rate of transmission $\beta(t)$ in the usual Susceptible Infected Recovered (SIR) model~\cite{kermack1927contribution} and detected changepoints in $\beta(t)$ through the course of the pandemic. For the SIR model, the transmission rate is defined as the rate of transmission of infection from an infected individual to a susceptible individual~\cite{kermack1927contribution}. 

While the SIR model and its parameters are relatively straightforward to understand and easy to estimate, the dynamics of transmission of COVID-19 is much more complex due to the presence of factors like undetected asymptomatic transmissions, latency period, incubation period, delay in reporting, under reporting, different transmission rates for asymptomatic and symptomatic individuals, quarantined and hospitalized individuals, and erroneous testing resulting in false negatives, to name a few. So, $\beta(t)$ alone might not be a suitable representative of the spread of the disease. 

So, we look at another related parameter which is the basic reproduction number $R_0$. While SIR model assumes constant $R_0$ we estimate the time varying reproduction number $R_t$. $R_t$ can be defined as the expected number of cases directly generated by one case in the population at time $t$~\cite{Fraser1557}. We estimate $R_t$ using the \texttt{EpiEstim} package~\cite{10.1093/aje/kwt133} in \texttt{R} based on the incidence curve from $15^{th}$ March to $25^{th}$ September in India. The required data on the number of confirmed cases of COVID-19 has been collected from an API made by volunteer driven covid19india group~\cite{covid19indiaorg2020tracker}. We then apply rough-fuzzy CPD on the estimated $R_t$ to identify the changepoints. Note that modelling the changepoints in a fuzzy manner, instead of crisp, is logical and appropriate here as interventions and measures rolled out by government are impossible to be implemented throughout a vast country like in India instantaneously. 

\begin{figure}[ht]
    \centering
    \includegraphics[width = \linewidth]{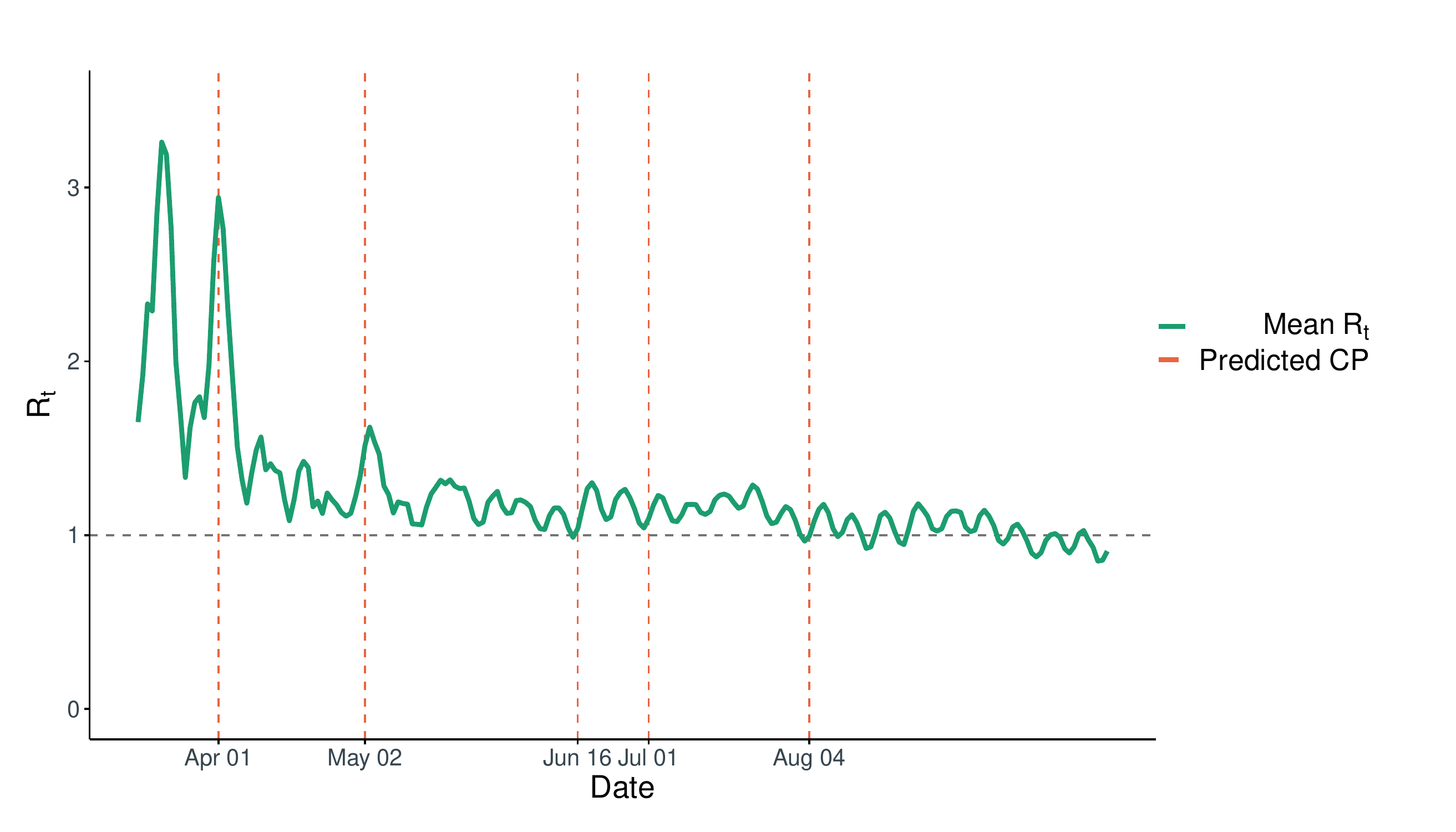}
    \caption{Changepoints in estimated $R_t$ for India}
    \label{fig:R0_india}
\end{figure}

Figure \ref{fig:R0_india} shows the estimated changepoints along with the estimates of $R_t$. We see that our method detects $5$ changepoints on $1^{\text{st}}$ April, $2^{\text{nd}}$ May, $16^{\text{th}}$ June, $1^{\text{st}}$ July and $4^{\text{th}}$ August. Of these, the changepoints on $2^{\text{nd}}$ May, $1^{\text{st}}$ July and $4^{\text{th}}$ August are very close to the date of commencement of Lockdown 3, Unlock 2 and Unlock 3 respectively. The first changepoint (on $1^{\text{st}}$ April) corresponds to the initial elbow in the estimate of $R_t$ after 7 days of the $1^{\text{st}}$ Lockdown in India. This shows that the effect of the $1^{\text{st}}$ Lockdown was not immediate and it took approximately 7 days to initially check the unrestricted spread of the disease. Hence, we infer that the effect of Lockdown 3, Unlock 2 and Unlock 3 were much more pronounced on $R_t$ than the other nationwide interventions in India.

\section{Conclusion}\label{sec:conclusion}

Though gradual changepoints are present in various time series data, they are usually overlooked. Our study aims to contribute in this direction by enriching the existing methodologies using the principles of Fuzzy Rough set theory. In fact, the biggest strength of our approach is that it is independent of the base method, and thus any changepoint detection algorithm expressed as shown in figure~\ref{eqn:old-cp} can be subjected to improvement by our proposal. Here, we have presented only 3 cases of base methods for computing the regularity measure - parametric two sample test, nonparametric two sample tests and stationarity test (unit root test). However, our choices are not limited to the aforementioned cases. Any method for detecting changepoint which provides any type of anomaly score or regularity score can be used in our framework with suitable transformation. Thus our method allows to utilize the rich collection of methods existing for crisp changepoint detection for the problem of fuzzy changepoint detection. 

As various simulations show, even for very simple regularity measures, like t-test statistic and Kolmogorov-Smirnov test statistic, combining them with the rough-fuzzy CPD increases their efficiency by far. In all cases, except for the discrete jump models (where the assumption of fuzziness in changepoint is violated), rough-fuzzy CPD reduces the MSE in detecting the changepoints. In comparison to existing fuzzy and crisp changepoint detection algorithms, rough-fuzzy CPD turns out to be more efficient in estimation of continuous and smooth changes in mean.

The asymptotic distribution mentioned in theorem~\ref{thm:delta-method-single} and theorem~\ref{thm:delta-method-multiple} connects the rough-fuzzy CPD to a hypothesis testing framework, allowing one to output the statistical significance of the estimated changepoints as well. Since, theorem~\ref{thm:delta-method-multiple} allows us to derive the joint asymptotic distribution of the proposed entropy at multiple time points, we are able to rule out false positives and apply the rough-fuzzy CPD for multiple changepoint detection problem. 

From a computational point of view, this paper addresses the issue of obtaining closed form expressions for upper and lower approximations of the rough fuzzy partitions, described by an estimated changepoint, thus allowing the overall algorithm to be much faster. In general, the time and space complexity of the overall algorithm will be dominated by the cost of computing the regularity measures. 

Our approach of using fuzzy and rough set theories has definite advantages which is reflected in the robustness of the rough-fuzzy CPD. Under changes in the signal-to-noise ratio, this method gives a steady improvement over the regularity measure based changepoint detection. We observe that for a diverse range of SNR, the improvement in accuracy of estimates obtained by the rough-fuzzy CPD over that of the base model remains fairly uniform ranging from 88\% to 96\%. On the other hand, increasing the fuzziness $\mathcal{F}$ in the continuous change in mean function increases the efficiency of the rough-fuzzy CPD in comparison to the base statistic used. As fuzziness increases, a rapid increment in efficiency can be observed. Also, rough-fuzzy CPD rarely outputs outlying or spurious false positive estimate of changepoint, precisely because the entropy curve $H^E_{\Delta, \delta, w}(s)$ possesses more smoothness properties than its regularity measure $R(t)$ counterpart. In fact, even in the cases when the base model predicts a bimodal distribution of estimated changepoints, our model shrinks the two modes towards the true changepoint, as illustrated in figure \ref{fig:adf-all}. Finally, through sensitivity analysis, we have shown that the rough-fuzzy CPD performs reasonably well on a wide range of hyperparameter values.

On the flip side, there are some obvious limitations. The rough-fuzzy CPD may not perform good if the assumption of fuzziness in the true changepoint is violated (i.e., there is a discrete jump discontinuity in the mean function), or if the regularity measure $R(t)$ is a bad indicator for the type of changepoint that we are trying to detect. Another limitation could be the choice of the hyperparameters $w, \Delta$ and $\delta$. 

A future possibility for extension of this investigation may consider building a probabilistic view of the detected changepoint, which should enable one to provide a confidence interval for the estimate of changepoint, which is more meaningful in terms of gradual change. Further, rough-fuzzy CPD can be viewed as an extension of an image segmentation algorithm~\cite{image-ambiguity-skpal} into the domain of changepoint detection. A future endeavor in this direction could be to generalize a family of image segmentation algorithms to fit perfectly in the context of a changepoint detection problem.

\section*{Software}

For broader dissemination of our work, we have developed the python package \texttt{roufcp} which is available at \hyperlink{https://pypi.org/project/roufcp/}{pypi.org/project/roufcp/}. All codes for the package has been open sourced and are made available at a github respository \hyperlink{https://github.com/subroy13/roufcp}{github.com/subroy13/roufcp}.

\section*{Acknowledgements}
SK Pal acknowledges the National Science Chair of SERB-DST, Govt. of India, that he is holding currently.


\printbibliography

\end{document}